\newtheorem{lemma}{Lemma}
\theoremstyle{remark}
\newtheorem*{remark}{Remark}
\numberwithin{equation}{section}
\def\rket{\rangle}
\def\lbra{\langle}
\newcommand{\bra}[1]{\mathinner{\left \langle #1 \right\rvert}}
\newcommand{\ket}[1]{\mathinner{\left \lvert #1 \right\rangle}}
\newcommand\lr[1]{\left( #1 \right)}
\newcommand\lrv[1]{\left|  #1 \right|}
\newcommand\lrb[1]{\left\lbrace #1 \right\rbrace}
\newcommand{\mbb}[1]{\mathbb{#1}}
\newcommand{\mc}[1]{\mathcal{#1}}
\newcommand{\vbl}{\vphantom{X^{X^X}}}
\newcommand{\D}{\,\mathrm{d}}
\newcommand{\dx}{\D x}
\newcommand{\dy}{\D y}
\newcommand{\ds}{\D s}
\newcommand{\dt}{\D t}
\newcommand{\dz}{\D z}
\newcommand{\hankel}[1] [\nu]{H^{(1)}_{#1}}
\newcommand{\e}{\mathrm{e}}
\newcommand{\imagI}{ \mathrm{i} }
\newcommand{\ratioLower}{\alpha}
\newcommand{\constFinal}{1.4818}
\newcommand{\constT}{0.8663}
\newcommand{\constC}{0.13368}
\newcommand{\constA}{0.7326}
\def\blfootnote{\gdef\@thefnmark{}\@footnotetext}
\begin{document}
\title{Strong dispersion property for the quantum walk on the hypercube}

\author{M Kokainis$^1$, K Pr\=usis$^1$,  J Vihrovs$^1$, V Kashcheyevs$ ^2 $ and A Ambainis$^1$}

\address{$^1$ Centre for Quantum Computer Science, Faculty of Computing, University of Latvia, Rai\c{n}a 19, Riga, Latvia, LV-1586}
\address{$^2$ Department of Physics, University of Latvia, Jelgavas 3, Riga, Latvia, LV-1004}

\begin{abstract}
  We show that the discrete time quantum walk on the Boolean hypercube of dimension $n$ has a strong dispersion property: if the walk is started in one vertex, then the probability of the walker being at any particular vertex after $O(n)$ steps is of an order $O(1.4818^{-n})$. This improves over the known mixing results for this quantum walk which show that the probability
  distribution after $O(n)$ steps is close to uniform but do not
  show that the probability is small for every vertex.
  Our result shows that quantum walk on hypercube is interesting for algorithmic applications which require 
fast dispersion over the state space.
\end{abstract}

\maketitle

\noindent{\it Keywords\/}: quantum walk, Boolean hypercube, dispersiveness

\blfootnote{}
\blfootnote{This is an Accepted Manuscript. The journal reference for the final published version is \href{http://doi.org/10.1088/1751-8121/aca6b9}{\textit{J. Phys. A: Math. Theor.} \textbf{55}, 495301 (2022)}. This Accepted Manuscript is available for reuse under a \href{https://creativecommons.org/licenses/by-nc-nd/4.0/}{CC BY-NC-ND licence} after the 12 month embargo period provided that all the terms and conditions of the licence are adhered to.}

\section{Introduction}
Quantum walks are the quantum counterpart of random walks. They have been very useful for designing quantum algorithms, from the exponential speedup for the ``glued trees'' problem by Childs et al.~\cite{childs2003exponential} and the element distinctness algorithm of \cite{ambainis2007element} to general results about speeding up classes of Markov chains \cite{szegedy2004Markov,apers2019fast,ambainis2020quadratic,apers2021unified}. Quantum walks also have applications to other areas (e.g. quantum state transfer \cite{mohseni2008environment} or Hamiltonian complexity \cite{nagaj2009fast}) and are interesting objects of study on their own. 

One of most important properties of both classical random walks and quantum walks is {\em rapid mixing} \cite{levin2017markov,randall2006rapidly}: if the walk is started in one vertex, after a certain number of steps the probability distribution of the walker is almost uniformly distributed over all vertices. Rapid mixing takes place for many graphs, the key condition for it is that the graph on which the walker is walking has no bottlenecks that may slow it down.  Classically, rapid mixing is useful for a variety of algorithms that perform sampling, counting or integration (for example, algorithms for estimating volumes of convex bodies \cite{dyer1991random,lovasz2006simulated}). Quantum walks also mix rapidly in many cases  (e.g. \cite{moore2002quantum}), for an appropriate definition of mixing and their mixing times can be related to the same combinatorial quantities of the underlying graph as classically \cite{aharonov2001quantum}. 

In this paper, we show that a popular quantum walk, the discrete time walk on the hypercube \cite{moore2002quantum}, has a property that
is substantially stronger than standard mixing.
In more detail, the Boolean hypercube consists of $2^n$ vertices indexed by $n$ bit strings $x_1\ldots x_n, x_i\in\{0, 1\}$,with vertices $x_1\ldots x_n$ and $y_1\ldots y_n$ connected by an edge if the strings $x_1\ldots x_n$ and $y_1\ldots y_n$ differ in exactly one symbol. Quantum walk on the hypercube has been studied in detail \cite{alagic2005decoherence, krovi2006hitting,marquezino2008mixing,potovcek2009optimized} and it was the first graph for which a search algorithm by quantum walk was developed, by Shenvi et al. \cite{shenvi2003quantum}. 

We show that the discrete time quantum walk on the hypercube has a very strong dispersion property: if a walker is started in one vertex (with the coin register being in a uniform superposition of all possible directions), then, after $O(n)$ steps, the probability of the walker being in each vertex becomes exponentially small. Our computer simulations show that, after $0.849... n$ steps, the probability of being at each vertex is at most $1.93...^{-n}$. Since the hypercube has $2^n$ vertices, the quantum walk is close to achieving the biggest possible dispersion. (In the uniform distribution, each vertex has the probability $2^{-n}$.) Rigorously, we show a result of a similar asymptotic form with somewhat weaker constants: the probability of being at each vertex after $\constT n$ steps is at most $O(\constFinal^{-n})$.

These two results provide a stronger bound on the probabilities of individual vertices than the previously known mixing results which only imply that probabilities of most vertices are close to $1/2^n$ but do not exclude the possibility that some vertices may have significantly larger probability. For example, the mixing result of \cite{moore2002quantum} implies that the maximum probability of one vertex of the hypercube is $o(n^{-7/6})$. Our result improves on this exponentially. 

Up to our knowledge, a strong dispersion property like ours has not been known for any discrete time quantum walk. In continuous time, a perfect dispersion can be achieved. Namely \cite{moore2002quantum}, after $\frac{\pi}{4} n$ time, the probability distribution of the continuous-time walker over the vertices of the hypercube is exactly uniform.

There are two important distinctions between discrete and continuous time walks here, one from applications perspective, one from methods perspective. From an applications perspective, quantum algorithms consist of discrete time steps. Hence, discrete time quantum walks are more suited to being used in a quantum algorithm. 
In particular, we plan to explore quantum walks in the context of query problems that show exponential separation between quantum and randomized classical computation, along the lines of \cite{aaronson2018forrelation,brandao2013exponential}.
Having the dispersion property for discrete time quantum walks is essential in this context.

From a methods perspective, dispersion for continuous time walks is easy to prove, because the Hamiltonian of the continuous time walk can be expressed as a sum of Hamiltonians corresponding to each dimension of the hypercube. Thus, the dispersion property for the walk in $n$ dimensions follows from a similar property in one dimension which reduces to analyzing 2-by-2 matrices.

In contrast, the matrix of the discrete time walk does not factorize into parts corresponding to each dimension and this makes the result for the discrete time walk much more challenging. As a result, we need a sophisticated proof based on analytic properties of  Bessel functions. 

\section{Results}

For a positive integer $n$, 
let $[n]$ denote the set $\{1, \ldots, n\}$.

The $n$-dimensional Boolean hypercube $Q_n$ is a graph with $2^n$ vertices indexed by $x\in\{0, 1\}^n$ and edges $(x, y)$ for $x, y$ that differ in one coordinate. The notation $\{0\}^n$ is shortened as $0^n$. Let $|x|$ denote the Hamming weight of a vertex $x$, defined as the number of $i\in[n]$ with $x_i=1$. 
For $x\in\{0, 1\}^n$ and $i\in [n]$, $x^{(i)}$ denotes the vertex obtained from $x$ by changing the $i^{\rm th}$ component:
\begin{equation}
     x^{(i)} = (x_1, \ldots, x_{i-1}, 1-x_i, x_{i+1}, \ldots, x_n) .
\end{equation}

We consider the standard discrete-time quantum walk on the hypercube \cite{moore2002quantum}, with the Grover diffusion operator as the coin flip. The state space of this walk has basis states 
$\ket{x, i}$ where $x\in\{0, 1\}^n$ is a vertex of $Q_n$ and {$i\in[n]$} is an index for one of the directions for the edges of the hypercube.

One step of the quantum walk consists of two parts:
\begin{enumerate}
\item
We apply the diffusion transformation $D_n$ defined by 
$D_n\ket{x,i} = -\frac{n-2}{n} \ket{x,i} + \frac{2}{n} \sum_{j\in [n] \setminus \{ i\}} \ket{x,j}$ for all $x\in \{0,1\}^n$ and $i\in[n]$.
We refer to this step as {\em coin flip}, as it corresponds to a coin flip in a classical random walk which chooses direction $i$ in which a classical random walker proceeds.
\item
We apply the shift transformation $S$ defined by $S\ket{x, i}=
\ket{x^{(i)}, i}$.
\end{enumerate}

We denote the sequence of these two transformations by $W=S\,D_n$.
The walk is started in the state $\ket{\psi_{start}} = \sum_{i=1}^n \frac{1}{\sqrt{n}} \ket{0^{n}, i}$ where the walker is localized in one vertex and the direction register is in the uniform superposition of all possible directions.

Let
\begin{equation*}
    \ket{\psi^{(t)}} = \sum_{x\in\{0, 1\}^n, i\in[n]} 
\alpha^{(t)}_{x, i} \ket{x, i}
\end{equation*}
be the state of the quantum walk after $t$ steps
and $P(x, t)=\sum_i |\alpha^{(t)}_{x, i}|^2$
be the probability of the walker being at location $x$ at this time.

After $t\approx 0.85 n$ steps, the walker disperses over the vertices $x\in\{0, 1\}^n$ very well, with no particular vertex $x$ having a substantial probability $P(x, t)$ of the walker being there. 

In \cref{fig:figure1_AD}, the upper   panel 
shows the maximum probability of a single vertex $\max_x P(x, t)$ at every step of a  quantum walk on a 50-dimensional hypercube. We can see that this probability reaches a minimum of about $10^{-14}$ (which is only slightly larger than the  theoretical minimum of $2^{-50} = 0.88... \cdot 10^{-16}$) after a number of steps that is slightly less than $n=50$.

\begin{figure}[H]
		\centering
		\includegraphics[width=\textwidth,height=\textheight,keepaspectratio]{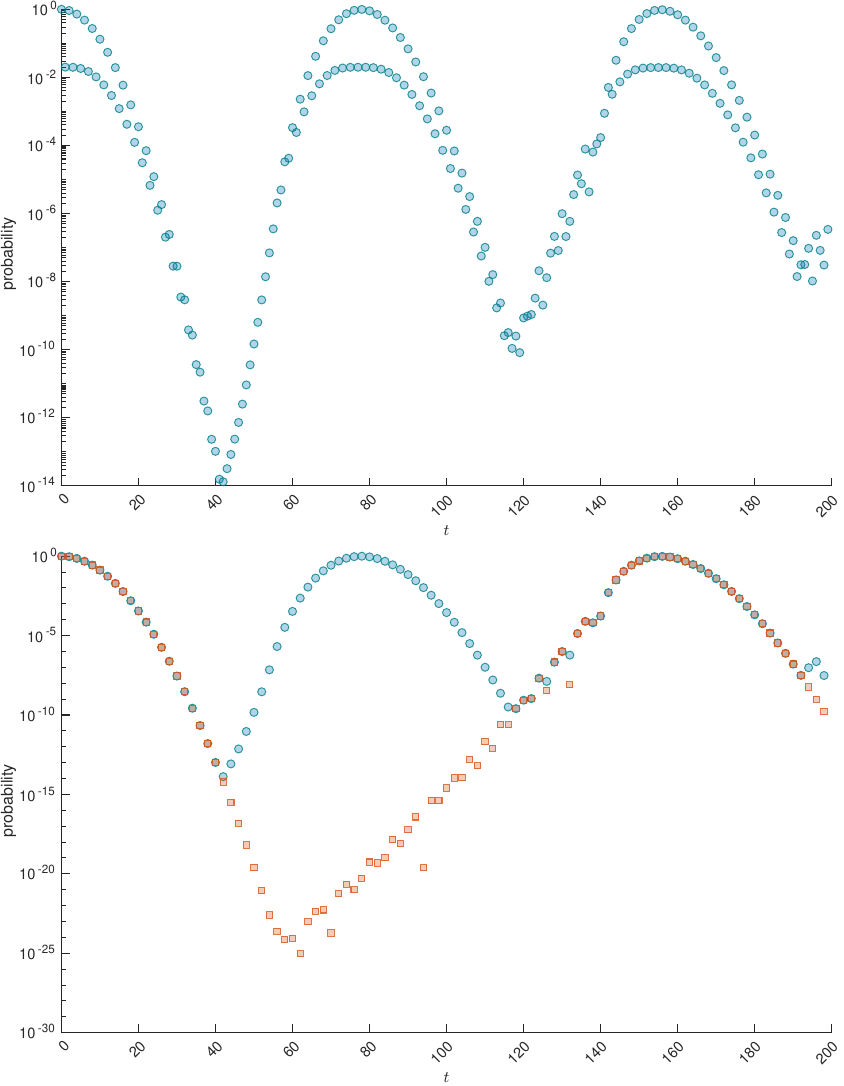}
		\caption{
		Illustration of the probabilities
		$\max_x P(x, t)$  and $ P(0^{50},t) $
		for a quantum walk on the 50-dimensional hypercube.\\
		\emph{Upper panel.} The maximum probability amongst all vertices $\max_x P(x, t)$.\\ 
		\emph{Lower panel.}  Only even steps   are shown. The circular markers:    the maximum probability amongst all vertices $\max_x P(x, t)$  (same as in the upper panel for even $t$). The square markers:  the probability at the initial vertex,  $ P(0^{50},t) $. \\
The data that support the graphs of this figure are available in the Zenodo repository \url{https://doi.org/10.5281/zenodo.5907185}.
		}
		\label{fig:figure1_AD}
\end{figure}

\begin{figure}[H]
		\centering
		\includegraphics[width=\textwidth,height=\textheight,keepaspectratio]{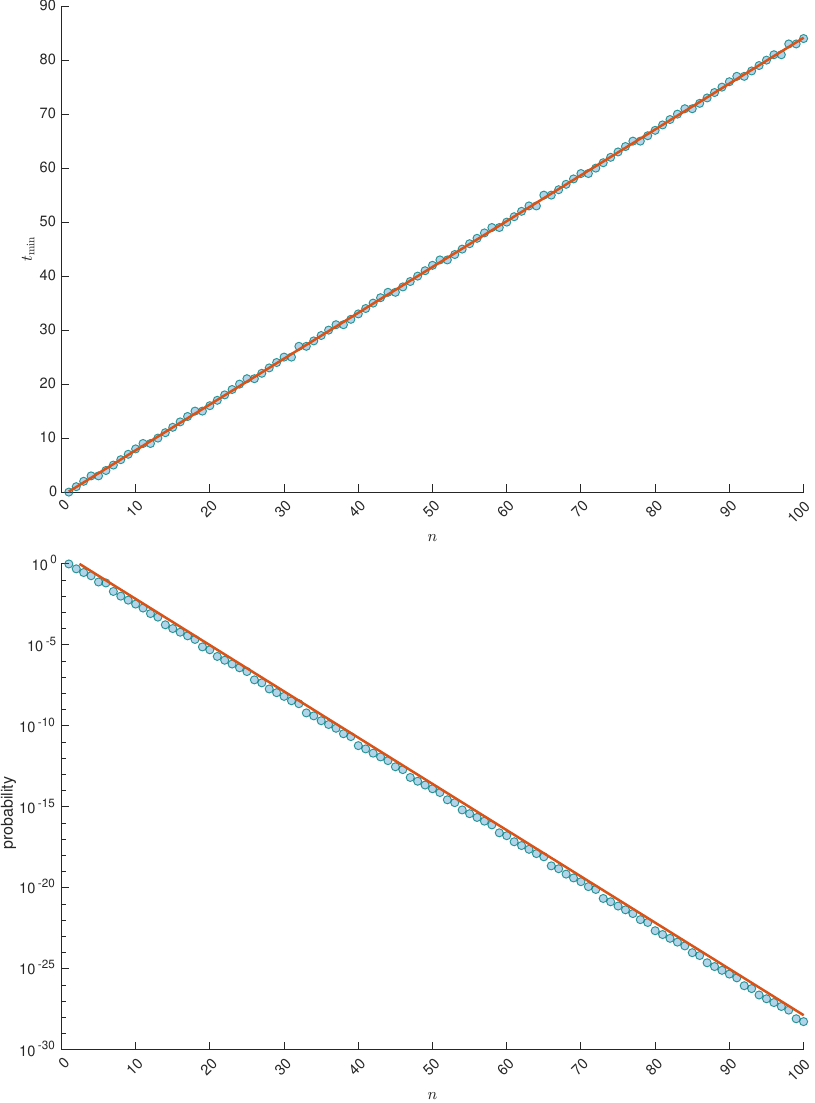}
		\caption{
		Illustration of $t_{\min}$ and  $\max_x P(x, t_{\min})$  for various values of $n$.\\ 
		\emph{Upper panel.} The circular markers: the number of steps $t_{\min}$ to reach the minimum of $\max_x P(x, t)$. The solid line: the graph of the function $-0.754+0.849n$, which approximates $t_{\min}$.\\
		\emph{Lower panel.} The circular markers:  the maximal (over all vertices of the hypercube)  probability $\max_x P(x, t_{\min})$, at time $t_{\min} \approx 0.849n$. The solid line:  the graph of the function $5\cdot 1.93^{-n}$, which is an upper bound on the probability.\\
The data that support the graphs of this figure are available in the Zenodo repository \url{https://doi.org/10.5281/zenodo.5907185}.
		}
		\label{fig:figure2_BC}
\end{figure}

We note that $\max_x P(x, t)$ fluctuates between odd and even numbered steps, due to the walker being at an odd distance from the starting vertex $0^n$ after an odd number of steps and at an even distance from $0^n$ after an even number of steps. This effect is particularly pronounced when $\max_x P(x, t)$ is large. Then, a large fraction of the probability is concentrated on $0^n$ after even steps but, after odd steps, this probability is equally divided among $n$ vertices $x$ with $|x|=1$. As a result, the maximum $\max_x P(x, t)$ is visibly larger after an even number of steps. This effect becomes smaller when $\max_x P(x, t)$ is small.

The upper panel of \cref{fig:figure2_BC} shows that the number of steps $t_{\min}$ required to minimize $\max_x P(x, t)$ grows linearly with $n$, and is achieved at approximately $-0.754+0.849 n$.
The probability $\max_x P(x, t_{\min})$ achieved at this $t_{\min}$ is approximately $5 \cdot 1.93^{-n}$ (see the lower   panel of \cref{fig:figure2_BC}). 

Lastly, for $t\leq t_{\min}$
this maximum is achieved at $x=0^n$ (at even numbered steps) or at vertices $x$ with $|x|=1$ (for odd numbered steps). 
This is shown in the lower   panel of \cref{fig:figure1_AD} where we plot $\max_x P(x, t)$ and $P(0^n, t)$ for even numbered steps $t$ (as the probability at $0^n$ is $0$ at odd steps).
The panel shows that the maximum is achieved at $x=0^n$
until the moment when the probability $\max_x P(x, t)$ starts increasing again. 

Rigorously, we can prove a weaker bound:
\begin{restatable}{theorem}{themainthm}\label{thm:main}
For any integer $t \in \lr{0.86628n,0.86632n} $, we have $\max_x 
P(x,t) = O(\constFinal^{-n})$.
\end{restatable}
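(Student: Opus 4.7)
The plan is to diagonalize the walk using the Fourier (Hadamard) transform on the position register, reduce $P(x,t)$ to a single sum over Hamming weight $w=|k|$, and then extract the exponential decay rate via a Bessel-function asymptotic analysis.

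First, I would pass to the Fourier basis $\ket{k}=2^{-n/2}\sum_{x}(-1)^{k\cdot x}\ket{x}$. The shift becomes diagonal on the position register, $S\ket{k,i}=(-1)^{k_i}\ket{k,i}$, and one walk step acts on the coin register as $U_k = Z_k D_n$, where $Z_k=\mathrm{diag}((-1)^{k_i})$. In the two-dimensional invariant subspace of the coin containing $\ket{\psi_c}:=n^{-1/2}\sum_i\ket{i}$, the operator $U_k$ has eigenvalues $\e^{\pm\imagI\theta_w}$ with $\cos\theta_w=1-2w/n$, $w=|k|$. Since $\ket{\psi_{start}}=\ket{0^n}\otimes\ket{\psi_c}$, a direct Fourier computation yields
\begin{equation*}
 \alpha^{(t)}_{x,i}=\frac{1}{2^n}\sum_{k\in\{0,1\}^n}(-1)^{k\cdot x}\,\bra{i}U_k^t\ket{\psi_c},
\end{equation*}
and the matrix element on the right depends only on $w=|k|$ and the bit $k_i$. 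By the $S_n$-symmetry of the starting state, $P(x,t)$ then depends only on $m=|x|$.

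Second, I would expand $U_k^t\ket{\psi_c}$ in its eigenbasis and collect terms of equal weight $w$. Performing the inner sum over $|k|=w$ with signs $(-1)^{k\cdot x}$ produces a Krawtchouk-type polynomial $K_m(w)$, so that $P(x,t)$ is a quadratic combination of sums of the form
\begin{equation*}
 T_m(t,n)=\sum_{w=0}^{n}\binom{n}{w}K_m(w)\,\e^{\imagI t\theta_w}.
\end{equation*}
Via the substitution $w=\tfrac{n}{2}(1-\cos\phi)$ and Stirling's formula, the summand takes the saddle-point form $\e^{ng(\phi)+\imagI t\phi}$. After a contour deformation, $T_m(t,n)$ matches an integral representation of a Bessel function $J_m$ with (possibly complex) argument depending on $t$ and $n$. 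The key quantitative step would be to establish, uniformly in $m$ and $t=\lfloor\constT n\rfloor$, an identity $T_m(t,n)=2^n\,\mathcal{B}_m(t,n)+R_m$ with $\mathcal{B}_m$ an explicit combination of Bessel functions and a negligible remainder.

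Third, I would estimate $|\mathcal{B}_m(t,n)|$ exponentially in $n$ using sharp uniform bounds on $|J_\nu(z)|$ valid for real, complex, and transitional arguments (Debye-type expansions together with Nicholson-style integrals), and then optimize over $m\in\{0,\ldots,n\}$ to locate the worst-case vertex. The resulting one-dimensional optimization, with $t\approx\constT n$, should yield the exponent $-n\log\constFinal$. The main obstacle is the asymptotic step: producing a Bessel approximation with sufficiently tight uniform control on $R_m$ across the whole range of $m$ and both signs of $\theta_w$, in particular through the regime $w>n/2$ where $\theta_w>\pi/2$ and the saddle moves off the real axis. This is precisely the intricate Bessel-function analysis flagged in the abstract, and I would expect it to occupy the bulk of the technical work.
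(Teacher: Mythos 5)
Your outline stops short of a proof at exactly the point where the difficulty lives. The entire quantitative content of the theorem rests on the step you describe as ``establish, uniformly in $m$ and $t$, an identity $T_m(t,n)=2^n\,\mathcal{B}_m(t,n)+R_m$ with a negligible remainder'' and then bound $|\mathcal{B}_m|$ exponentially \emph{uniformly over all} $m\in\{0,\dots,n\}$. You have not carried this out, and you correctly flag it as the main obstacle; but without it there is no bound at all, so what you have is a program rather than a proof. Worse, the program as stated is harder than necessary: the summands in $T_m(t,n)$ are only polynomially small while the sum is exponentially small, so everything hinges on catastrophic cancellation, and controlling that uniformly in $m$ (including the regime $w>n/2$ where, as you note, the saddle leaves the real axis, and the transitional regime where Debye expansions break down) is a substantially heavier analytic task than the one the paper actually solves. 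It is also not clear that the constant $\constFinal$ would emerge from your optimization over $m$; in the paper that constant comes from balancing an entropy bound against a separate estimate, not from a single saddle-point exponent.

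The structural idea you are missing is that uniformity in $m$ is never needed. By symmetry $P(x,t)=P[w,t]/\binom{n}{w}$ with $w=|x|$, so the trivial bound $P(x,t)\le 1/\binom{n}{w}$ already gives $O(\constFinal^{-n})$ for all $w$ in the middle range $\constC n< w\le(1-\constC)n$ via the binary entropy function; vertices with $w>(1-\constC)n>t$ are unreachable in $t$ steps; and for small $w$ the paper proves a purely combinatorial amplitude-transfer lemma showing $P[w,t]\le \frac{n^w}{w!}\,p_0$ whenever $P[0,t']\le p_0$ for all $t'\in[t-w,t+w]$. This reduces the whole theorem to bounding the single return probability $P[0,t']$, i.e.\ your sum at $m=0$ only. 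Even there the paper does not run a saddle-point analysis on the sum: it converts $\frac{1}{2^n}\sum_m\binom{n}{m}T_t(1-2m/n)$ into the integral $t\int_0^\infty x^{-1}J_t(x)\cos^n(x/n)\,\dx$ via the Sonine--Schafheitlin formula, and then deforms the contour into the upper half-plane where the Hankel function $\hankel$ is non-oscillatory, using explicit Olver-type error bounds. If you want to salvage your route, the first thing to do is import the three-case reduction so that your Bessel analysis only ever has to handle $m=0$.
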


\begin{remark}
 Even though we consider the case when the initial   state  is $ \ket{\psi_{start}}$,
 the symmetry of the walk implies a similar result when starting in a state 
 $ \sum_{i=1}^n \frac{1}{\sqrt{n}} \ket{x, i}$, for any hypercube vertex $x$.  This gives the following conclusion  from \cref{thm:main}: for all $x,y \in \{0,1\}^n $ and $t$ as in \cref{thm:main},
 \begin{equation}
 \frac{1}{n}
    \sum_{j=1}^n  \lrv{  
    \sum_{i=1}^n  \bra{y,j} W^t \ket{x, i} }^2
     =
      O(\constFinal^{-n}).
 \end{equation}
\end{remark}

\section{Proof of the main result}\label{sec:proof}
We first describe the strategy of the proof of \cref{thm:main}. Because of the symmetry of the walk, all vertices $x$ with the same $|x|=k$ will have equal probabilities $P(x, t)$, see \eqref{eq:symm} below. Since there are ${\binom{n}{k}}$ vertices $x$, this immediately implies $P(x, t)\leq 1/{\binom{n}{k}}$. If $k$ is such that ${\binom{n}{k}}$ is sufficiently large, we get the desired upper bound on $P(x, t)$. 

It remains to handle the case when ${\binom{n}{k}}$ is small. This corresponds to $k$ being either close to 0 ($0\leq k < \constC n$) or close to $n$ ($(1-\constC) n < k \leq n$). The second case is trivial: the number of time steps $t$ that we are considering is less than $(1-\constC) n$, so, vertices $x$ with $|x|>(1-\constC) n$ cannot be reached in $t$ steps. 

For the first case, we show (\cref{eq:lemma}) that if $P(x, t)$ is large, then $P(0^n, t')$ must also be non-negligible for some $t'\in\{t-|x|, \ldots, t+|x|\}$. Therefore, one can show an upper bound on all $P(x, t)$ with $|x| \leq \constC n$ by 
upper bounding $P(0^n, t')$ for all $t': t- \constC n \leq t' \leq t + \constC n$. This is done by \cref{thm:P0t} and 
\cref{thm:L11}, first expressing $P(0^n, t')$ in terms of Chebyshev polynomials and then bounding their asymptotics.

We begin by describing the evolution of the quantum walker in terms of states that utilize the symmetry of the   walk.
Denote
\begin{align}
\ket{w,\rightarrow} &= \frac{1}{\sqrt{{\binom{n}{w}}(n-w)}}\sum_{\substack{x \\ |x| = w}} \sum_{\substack{i \\ x_i = 0}} \ket{x,i}, \\
\ket{w,\leftarrow} &= \frac{1}{\sqrt{{\binom{n}{w}} w}}\sum_{\substack{x \\ |x| = w}} \sum_{\substack{i \\ x_i = 1}} \ket{x,i}.
\end{align}
By symmetry, the state of the quantum walk after any number of steps $t$ is of the form
\begin{equation}
    \ket{\psi^{(t)}} = 
\sum_{w=0}^{n-1}\alpha^{(t)}_{w, \rightarrow} \ket{w,\rightarrow} 
+
\sum_{w=1}^{n}\alpha^{(t)}_{w, \leftarrow} \ket{w,\leftarrow} .
\end{equation}
Let 
\begin{equation}
\label{eq:pwt}
    P[w,t] = |\alpha^{(t)}_{w, \rightarrow}|^2+ |\alpha^{(t)}_{w, \leftarrow}|^2
\end{equation}
be the total probability of the walker being
at one of vertices $x$ with $|x|=w$ after $t$ steps.
By the symmetry of the quantum walk, 
\begin{equation}
\label{eq:symm}
P(x,t)=P[w,t] / {\binom{n}{w}}
\end{equation} for any $x : |x|=w$. In particular, $P(0^n,t)=P[0,t]$.

\subsection{Relating the probability to be at an arbitrary vertex with the probability to be at the initial vertex}
The following \lcnamecref{eq:lemma} shows that it suffices to bound $P[0, t']$ for a time interval
$t'\in [t-w', t+w']$, as this would imply bounds on $P[w, t]$ for $w\leq w'$.

\begin{lemma} \label{eq:lemma}
Suppose that $n \geq 2$, $t \geq w'$ and $P[0,t'] \leq p_0$ for all $t' \in [t-w',t+w']$, where $w'<n/2$.
Then for all $w \in \{0,1,\ldots,w'\}$, we have \begin{equation}P[w,t] \leq \frac{n^w}{w!}  p_0.\end{equation}
\end{lemma}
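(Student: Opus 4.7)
The plan is to proceed by induction on $w$, driven by two recurrences obtained from writing one step $U = SD_n$ in the level-symmetric basis $\{\ket{w,\rightarrow},\ket{w,\leftarrow}\}$. First I would compute $D_n\ket{w,\rightarrow}$ and $D_n\ket{w,\leftarrow}$, both of which stay at level $w$ and form a Hermitian $2\times 2$ block with diagonal $\pm(n-2w)/n$ and off-diagonal $2\sqrt{w(n-w)}/n$; composing with the shift $S\ket{w,\rightarrow}=\ket{w+1,\leftarrow}$, $S\ket{w,\leftarrow}=\ket{w-1,\rightarrow}$ yields the two update rules
\begin{align*}
\alpha^{(t+1)}_{w,\leftarrow} &= \frac{n-2w+2}{n}\,\alpha^{(t)}_{w-1,\rightarrow}+\frac{2\sqrt{(w-1)(n-w+1)}}{n}\,\alpha^{(t)}_{w-1,\leftarrow},\\
\alpha^{(t+1)}_{w-1,\rightarrow} &= \frac{2\sqrt{w(n-w)}}{n}\,\alpha^{(t)}_{w,\rightarrow}+\frac{2w-n}{n}\,\alpha^{(t)}_{w,\leftarrow}.
\end{align*}
In each recurrence the two coefficients form a unit vector, which follows from the algebraic identity $(n-2w+2)^2+4(w-1)(n-w+1)=n^2$ (and its shifted analogue).

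For the induction I would strengthen the claim to: for every $t$ with $[t-w,t+w]\subseteq[t_0-w',t_0+w']$ (where $t_0$ is the anchor time of the lemma), $P[w,t]\leq \frac{n^w}{w!}\,p_0$; the base case $w=0$ is just the hypothesis. Applying Cauchy--Schwarz to the first recurrence --- which is tight because the coefficients are a unit vector --- gives $|\alpha^{(t)}_{w,\leftarrow}|^2 \leq P[w-1,t-1] \leq \frac{n^{w-1}}{(w-1)!}\,p_0$ by the induction hypothesis. To handle $\alpha^{(t)}_{w,\rightarrow}$ I would invert the second recurrence --- possible because $w\leq w'<n/2$ makes both $\sqrt{w(n-w)}>0$ and $n-2w\geq 0$ --- to get
\[
\alpha^{(t)}_{w,\rightarrow}=\frac{n}{2\sqrt{w(n-w)}}\,\alpha^{(t+1)}_{w-1,\rightarrow}+\frac{n-2w}{2\sqrt{w(n-w)}}\,\alpha^{(t)}_{w,\leftarrow},
\]
whose two coefficients are nonnegative and sum to $\sqrt{(n-w)/w}$. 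Applying the triangle inequality (using $|\alpha^{(t+1)}_{w-1,\rightarrow}|^2 \leq P[w-1,t+1] \leq \frac{n^{w-1}}{(w-1)!}\,p_0$ from the induction hypothesis and the $\leftarrow$-bound just proved) yields $|\alpha^{(t)}_{w,\rightarrow}|^2\leq \frac{n-w}{w}\cdot\frac{n^{w-1}}{(w-1)!}\,p_0$. Summing the two contributions, $P[w,t]\leq \frac{w\,n^{w-1}+(n-w)\,n^{w-1}}{w!}\,p_0 = \frac{n^w}{w!}\,p_0$, closing the induction.

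The proof has no substantive obstacle; everything is forced by the two recurrences and the unit-vector identity. The main bookkeeping is to derive those recurrences carefully from the action of the Grover diffusion on the symmetric block, and to check that the times $t\pm 1$ needed at level $w-1$ remain inside the control window $[t_0-w',t_0+w']$. The hypothesis $w'<n/2$ enters only to guarantee $n-2w\geq 0$, so that the sum of the absolute values of the coefficients of the inverted recurrence collapses to $\sqrt{(n-w)/w}$ and the $w$ and $n-w$ contributions telescope exactly to $n/w!$ times $n^{w-1}$.
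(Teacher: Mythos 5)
Your proof is correct and follows essentially the same route as the paper's: your Cauchy--Schwarz bound on the forward recurrence is the paper's inequality $P[w-1,t-1]\geq|\alpha^{(t)}_{w,\leftarrow}|^2$, and your inverted-recurrence triangle inequality is precisely the contrapositive form of the paper's bound $\max\bigl(|\alpha^{(t)}_{w,\leftarrow}|^2,|\alpha^{(t+1)}_{w-1,\rightarrow}|^2\bigr)\geq\frac{w}{n-w}|\alpha^{(t)}_{w,\rightarrow}|^2$. The only difference is packaging---you run a direct induction on $w$ with a widening time window, while the paper proves the contrapositive by tracing a large amplitude back to level $0$---and both yield the identical factor $n^w/w!$.
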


\begin{proof} 
We prove the contrapositive: suppose that $P[w,t] = p_w$ for some $0<w<n/2$; then there exists a $t' \in [t-w,t+w]$ such that $P[0,t'] \geq \frac{w! p_w}{n^w}$.

We do this by showing two inequalities:
\begin{equation}
\label{eq:l11}
\max\left( |\alpha^{(t)}_{w, \leftarrow}|^2, |\alpha^{(t+1)}_{w-1, \rightarrow}|^2 \right) \geq \frac{w}{n-w} |\alpha^{(t)}_{w, \rightarrow}|^2 ,   
\end{equation}
\begin{equation}
\label{eq:l12}
P[w-1, t-1] 
\geq  |\alpha^{(t)}_{w, \leftarrow}|^2 .   
\end{equation}

These two inequalities imply that one of $P[w-1, t-1], P[w-1, t+1]$
is at least
\begin{equation}
\label{eq:prevstep}
\min\left( \frac{w}{n-w} |\alpha^{(t)}_{w, \rightarrow}|^2, 
 |\alpha^{(t)}_{w, \leftarrow}|^2 \right).\end{equation}
Because of \eqref{eq:pwt}, we must either have 
$|\alpha^{(t)}_{w, \rightarrow}|^2 \geq \frac{n-w}{n} P[w, t]$
or $|\alpha^{(t)}_{w, \leftarrow}|^2 \geq \frac{w}{n} P[w, t]$.
In both cases \eqref{eq:prevstep} is at least $\frac{w}{n} P[w, t]$.

By repeating this argument $w$ times, we get that $P[0, t']$, for some $t'\in[t-w, t+w]$ is at least
\begin{equation}
p_w \cdot \frac{w}{n} \cdot \frac{w-1}{n} \cdot \ldots \cdot \frac{1}{n} =  \frac{w! \, p_w}{n^w}.
\end{equation}

We now prove \eqref{eq:l11} and \eqref{eq:l12}.
To prove \eqref{eq:l11}, 
consider vertices $x\in\{0, 1\}^n$ for which $|x|=w$.
Before the coin flip $D_n$, the amplitudes of $\ket{x, i}$
with $x_i=0$ are equal to $\frac{\alpha^{(t)}_{w, \rightarrow}}{\sqrt{(n-w){\binom{n}{w}}}}$
and the amplitudes of $\ket{x, i}$
with $x_i=1$ are equal to $\frac{\alpha^{(t)}_{w, \leftarrow}}{\sqrt{w{\binom{n}{w}}}}$.
After applying the coin flip $D_n$, 
the amplitudes of $\ket{x, i}$
with $x_i=1$ become equal to
\begin{equation}
    \frac{2(n-w)}{n} \frac{\alpha^{(t)}_{w, \rightarrow}}{\sqrt{(n-w){\binom{n}{w}}}} - \frac{n-2w}{n} 
\frac{\alpha^{(t)}_{w, \leftarrow}}{\sqrt{w{\binom{n}{w}}}} .
\end{equation}
After the shift operation, each of those becomes an amplitude
of $\ket{y, i}$ with $|y|=w-1$ and $y_i=0$. 
Since $\ket{w-1, \rightarrow}$ consists of ${\binom{n}{w}-1} (n-w+1) = {\binom{n}{w}} w$ such $\ket{y, i}$,
we have
\begin{equation}
    \alpha^{(t+1)}_{w-1, \rightarrow} 
    =
\frac{2\sqrt{w(n-w)}}{n} \alpha^{(t)}_{w, \rightarrow} 
-
\frac{n-2w}{n} \alpha^{(t)}_{w, \leftarrow}.
\end{equation}
Assume that $|\alpha^{(t)}_{w, \leftarrow}| < \frac{\sqrt{w}}{\sqrt{n-w}} |\alpha^{(t)}_{w, \rightarrow}|$. (Otherwise, \eqref{eq:l11} is immediately true.)
Then,
\begin{equation}
\label{eq:afterq} 
 |\alpha^{(t+1)}_{w-1, \rightarrow}| \geq
\left( \frac{2\sqrt{w(n-w)}}{n} - \frac{\sqrt{w}}{\sqrt{n-w}}
\frac{n-2w}{n} \right) |\alpha^{(t)}_{w, \rightarrow}| .
\end{equation}
The equation \eqref{eq:l11} now  follows from 
\begin{equation}
    \frac{2\sqrt{w(n-w)}}{n} - \frac{\sqrt{w}}{\sqrt{n-w}}
\frac{n-2w}{n} 
= 
\frac{\sqrt{w}}{\sqrt{n-w}} .
\end{equation}

To prove \eqref{eq:l12}, we simply note that $S\ket{w-1, \rightarrow} = \ket{w, \leftarrow}$. Since coin flip $D_n$ moves the
amplitudes between $\ket{w-1, \leftarrow}$ and $\ket{w-1, \rightarrow}$, 
\eqref{eq:l12} expresses the fact
that all the
amplitude at $\ket{w, \leftarrow}$ after $t$ steps must have been
at $\ket{w-1, \leftarrow}$ or $\ket{w-1, \rightarrow}$ one step earlier.
This fact is obviously true.
\end{proof}

\subsection{Bounding the probability to be at the initial vertex}
Previously we demonstrated how the probability at a hypercube vertex is related to the probability at the initial vertex $P[0,t]$. Now we derive an explicit expression of $P[0,t]$ and apply it to  upper-bound   this probability.

In more details, we express (the square root of) the probability $P[0,t]$ through  Chebyshev polynomials, see \eqref{eq:3.28}; then we apply  an integral representation of Chebyshev polynomials (\cref{thm:bessel-chebyshev}), arriving at an integral representation of the  probability $P[0,t]$ (\cref{eq:thmP0t}).

The latter representation involving the integral $\int_0^{\infty} x^{-1} J_{t}(x)  \cos(x  z) \dx$  turns out far more suitable (compared to the expression involving Chebyshev's polynomials) for proving an upper bound.

This is in part due to a clear separation between the oscillating part ($J_t$) and the exponential part ($\cos^n (x/n)$). This distinction helps to explain how the ratio $t / n$ leads to the oscillating behavior of $P[0,t]$ observed in the lower   panel of \cref{fig:figure1_AD} (with $n=50$). To illustrate this, examine the interplay between both parts in  \cref{fig:3.2}, for $n=50$ and different values of $t$. 
\begin{itemize}
    \item When $t/n$ is ``small'' (e.g., see the upper row in \cref{fig:3.2} with $n=50$, $t=0.12n$), the Bessel function's first maximum largely overlaps with the contribution of $\cos^n(\cdot)$ in the vicinity of $x=0$. However, the subsequent maxima of $\cos^n(\cdot)$ occur where the oscillations of $J_t$ cancel out, resulting in an insignificant contribution to the integral. Therefore, the value of the integral is dominated by the values of the integrand near $x=0$ and the probability $P[0,t]$ is large.
    \item Now consider the case when  $t/n$ is ``large'' (see the middle row in \cref{fig:3.2} for the case $n=50$, $t=3.12n$). While the Bessel function's values near $x=0$ are negligible, the following local maximum greatly overlaps with the next peak of $\cos^n(x/n) $ (in the vicinity of $x\approx \pi n$). This results in a large contribution to the integral, which is not canceled out by the following peaks of  $\cos^n(x/n) $ (where the Bessel function's fluctuations make the contribution to the integral negligible).
    \item Finally, when  $t/n$ is ``just right'' (see the lower row in \cref{fig:3.2} for the case $n=50$, $t=0.84n$), the first maxima of the Bessel function occur when $\cos^n(x/n)$ is near zero. This ensures that the resulting integrand's oscillations near each peak of  $\cos^n(x/n) $ largely cancel out. We proceed to quantify the remaining integral's value and show that it indeed is exponentially small in $n$.
\end{itemize}

\begin{figure}[H]
		\centering
		\includegraphics[width=\textwidth,height=\textheight,keepaspectratio]{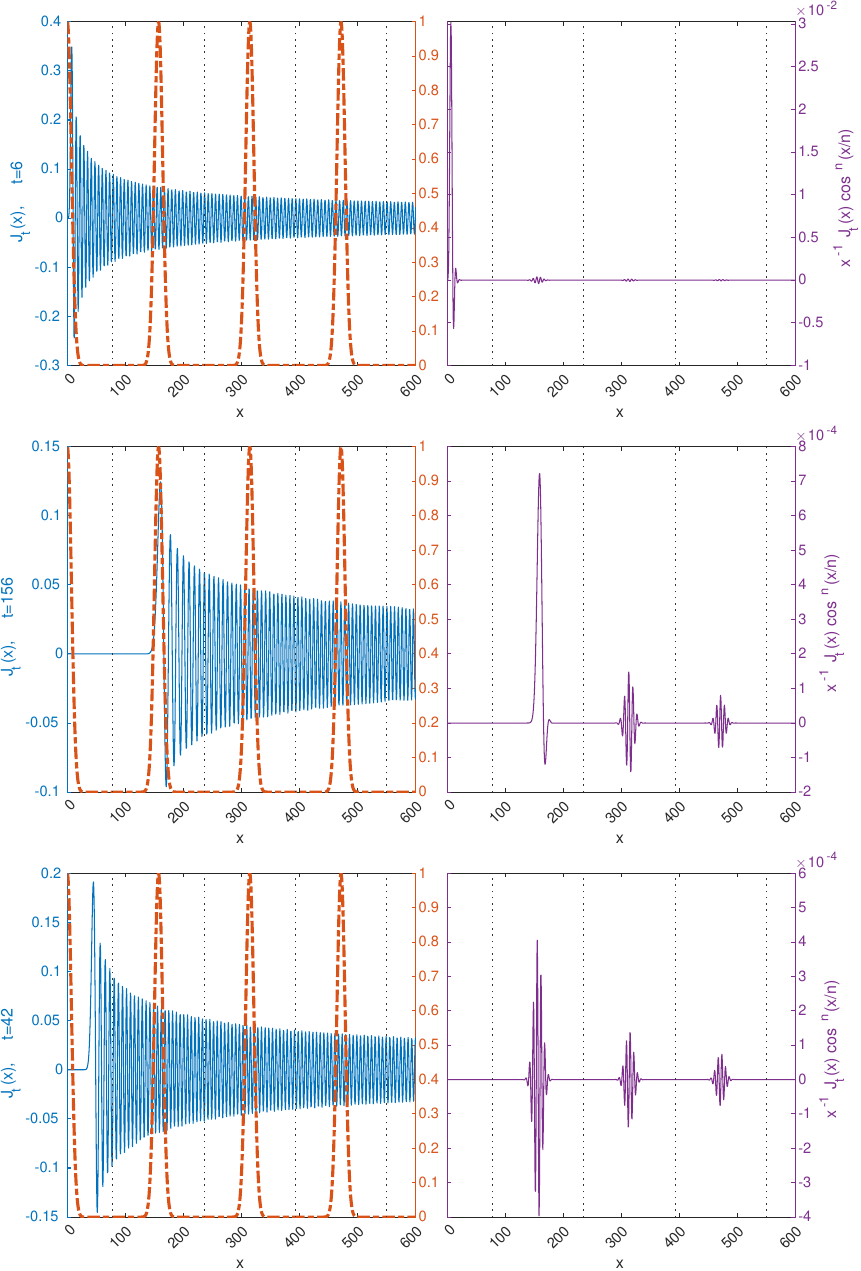}
		\caption{Illustration of the integrand and its factors for $n=50$ and 
		$t=0.12n$ (upper row),  
		$t=3.12n$ (middle row),  
		$t=0.84n$ (lower row).
	Panels on the left show $J_t(x)$, solid line, with units on the left axis and $\cos^n(x/n)$, dash-dotted line, with units on the right axis. {Panels on the right} depict the integrand $x^{-1} J_{t}(x)  \cos^n \frac{x}{n} $. The dotted gridlines divide the real line into subintervals separating the peaks of $\cos^n(x/n)$.}
		\label{fig:3.2}
\end{figure}

Despite this intuition, at first glance both expressions of $\sqrt{P[0, t]}$ (the discrete sum involving Chebyshev polynomials and the integral)   suffer the same drawback (or the advantage) of catastrophic cancellations. Indeed, while the integrand's extreme values near each peak of $\cos^n(x/n) $ are polynomially decaying in $n$, cancellations ensure that the overall contribution is exponentially small! This property prohibits from making simple estimates of $\sqrt{P[0, t]}$  in either case. However, here the main advantage of the integral representation manifests itself: instead of computing integrals over real intervals, we can exploit  Cauchy's integral theorem to transform them to contour integrals in the complex plane. Since the integrand turns out to be non-oscillatory in the complex plane in directions orthogonal to the real axis, this provides an efficient way to estimate the value of $P[0,t]$. This approach leads to the asymptotic estimates in \cref{thm:L11} and the desired upper bound on the probability $P[0,t]$.

The rigorous arguments begin with a proof of 
the following representation of Chebyshev polynomials of even order, valid for $|z| \leq 1$:
\begin{lemma}\label{thm:bessel-chebyshev}
Let $t$ be a positive even integer; then the following equality is valid for all $z \in [-1,1]$:
	\begin{align} \label{eq:TtoJ}
		T_{t}(z) = (-1)^{t/2} \, t \int_0^{\infty} x^{-1} J_{t}(x)  \cos(x  z) \dx, 
	\end{align}
	where $T_{t}$ is the $t^{\text{th}}$-degree Chebyshev polynomial of the first kind and $ J_{t}$ stands for the Bessel function of the first kind.
\end{lemma}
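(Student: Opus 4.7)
The plan is to recognize the right-hand side of \eqref{eq:TtoJ} as a specialization of a classical Weber--Sonine integral and then read off the Chebyshev polynomial on the left. Parametrise $z=\cos\theta$ with $\theta\in[0,\pi/2]$, so that $T_t(z)=\cos(t\theta)$ and $\arcsin z=\pi/2-\theta$. Then \eqref{eq:TtoJ} for $z\in[0,1]$ is equivalent to
\begin{equation}
\int_0^\infty \frac{J_t(x)}{x}\cos(xz)\,\dx \;=\; \frac{(-1)^{t/2}}{t}\,T_t(z).
\end{equation}

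The first step is to invoke the classical identity
\begin{equation}
\int_0^\infty \frac{J_\nu(x)}{x}\cos(xz)\,\dx \;=\; \frac{\cos(\nu\arcsin z)}{\nu},\qquad 0\le z<1,\ \nu>0,
\end{equation}
found in Watson's \emph{Treatise on Bessel Functions} or Gradshteyn--Ryzhik 6.693. Setting $\nu=t$ even, I would use $\cos(t\pi/2)=(-1)^{t/2}$ and $\sin(t\pi/2)=0$ to expand
\begin{equation}
\cos(t\arcsin z) = \cos\lr{\tfrac{t\pi}{2}-t\theta} = (-1)^{t/2}\cos(t\theta) = (-1)^{t/2}T_t(z),
\end{equation}
which yields \eqref{eq:TtoJ} on $[0,1)$. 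The extension to all of $[-1,1]$ is automatic: for even $t$ the polynomial $T_t$ is even, and $\cos(xz)$ is also even in $z$. The boundary $z=\pm 1$ follows by continuity, since $x^{-1}J_t(x)=O(x^{t-1})$ near $0$ and $O(x^{-3/2})$ at infinity, so the integrand is absolutely integrable uniformly in $z\in[-1,1]$.

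If a self-contained derivation of the cited integral is preferred, I would differentiate $F(z):=\int_0^\infty x^{-1}J_t(x)\cos(xz)\,\dx$ in $z$ to obtain $F'(z)=-\int_0^\infty J_t(x)\sin(xz)\,\dx = -\sin(t\arcsin z)/\sqrt{1-z^2}$, where the last equality is the Sonine sine-transform of a Bessel function (derivable from the Laplace transform $\int_0^\infty \e^{-sx}J_\nu(x)\,\dx = (s+\sqrt{s^2+1})^{-\nu}/\sqrt{s^2+1}$ by analytic continuation $s\mapsto -\imagI z$). Integrating via the substitution $u=\arcsin z$ and matching with the classical value $F(0)=\int_0^\infty J_t(x)\,\dx/x = 1/t$ recovers $F(z)=\cos(t\arcsin z)/t$. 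The only delicate point is legitimising differentiation under the integral sign, which follows by dominated convergence from the $O(x^{-3/2})$ tail. The principal obstacle, such as it is, is bookkeeping around endpoint convergence rather than anything of genuine substance.
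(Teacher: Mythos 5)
Your proposal is correct and follows essentially the same route as the paper: both invoke the classical Sonine--Schafheitlin/Weber integral $\int_0^\infty x^{-1}J_t(x)\cos(xz)\,\dx = t^{-1}\cos(t\arcsin z)$, rewrite $\cos(t\arcsin z)=(-1)^{t/2}T_t(z)$ for even $t$, and extend to all of $[-1,1]$ by evenness and continuity. The optional self-contained derivation you sketch is extra, but the core argument matches the paper's.
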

\begin{proof}
	We start with a special case of the Sonine-Schafheitlin formula\cite[p.405, \S 13.42, Eq. (3)]{Watson1922}
	\begin{equation}\label{eq:l2e01}
	    \int_{0}^{\infty} \frac{J_{\mu}(ax)  \cos (bx)}{x} \dx
	=
	\mu^{-1} \cos \left(  \mu \arcsin (b/a) \right), \quad 0 < b \leq a.
	\end{equation}
	Taking $a=1$, $ b = z \in (0,1]$ and $\mu=t$ in \eqref{eq:l2e01} gives
	\begin{align}
		\int_0^{\infty} \frac{J_{t}(x) \cos (x \, z)}{x} \dx & = \frac{\cos\left( t  \cdot  \arcsin  z \right) }{t} .
	\end{align}
	It remains to notice that
	\begin{equation}
	    \cos\left( t  \cdot  \arcsin  z \right)
	=
	\cos\lr{t \lr{\frac{\pi}{2}  -  \arccos z}} =(-1)^{t/2} \cos (t \arccos{z}) 
	=(-1)^{t/2} T_{t}(z).
	\end{equation}
	Continuity and evenness of $\cos$ and $T_{t}$ ensures the validity of \eqref{eq:TtoJ} for $z \in [-1,0]$.
\end{proof}

The following \lcnamecref{thm:bessel-chebyshev} allows us to characterize the probability $P[0, t]$ to obtain the vertex $0^n$ after $t$ steps as follows: 
\begin{lemma}\label{thm:P0t}
Let $t$ be a positive integer; if $t$ is odd, then $P[0,t]=0$. If $t$ is even, then
\begin{equation}\label{eq:thmP0t}
    \sqrt{P[0,t]}  = \frac{1}{2^n} \sum_{m=0}^n \binom{n}{m} T_t\lr{1-\frac{2m}{n}} 
= t \lrv{ \int_0^\infty x^{-1} J_t(x)\cos^n \frac{x}{n} \dx  } \, .
\end{equation}
\end{lemma}
\begin{proof}
Due to the symmetry of the walk (w.r.t.\ permuting the coordinates of the hypercube), the amplitudes of $\ket{0^n, j}$ for all $j\in [n]$ remain equal after any number of steps. Therefore,
\begin{equation}
     P[0, t] = \lbra \psi_{start} | W^t | \psi_{start} \rket^2 .
\end{equation}
Let $\ket{ v_j}$ be the eigenvectors of $W$, with eigenvalues $\e^{\imagI \lambda_j}$. We can represent $\ket{\psi_{start}}$ as a linear combination of the
eigenvectors, $\ket{\psi_{start}} = \sum_j a_j \ket{ v_j}$, then 
\begin{equation}
    \sqrt{P[0, t]} = \lbra \psi_{start} | W^t | \psi_{start} \rket = \sum_j |a_j|^2 \e^{\imagI t \lambda _j} .
\end{equation}
As described in \cite{moore2002quantum}, $W$ has $2^{n+1}$ eigenvectors that have non-zero overlap with the starting state.
These eigenvectors can be indexed by $k\in\{0, 1\}^n$. 
For each $k\in\{0, 1\}^n$, we have a pair of eigenvectors 
$\ket{v^{+}_k}$ and $\ket{v^{-}_k}$  
with eigenvalues $\e^{\pm \imagI w_m}$ where $m=|k|$ is the Hamming weight of $k$ and $w_m$ is such that  $\cos w_m = 1 - {2m}/{n}$. Furthermore, all of those eigenvectors have equal overlap with the starting state: 
$ |\lbra v^{+}_k \ket{\psi_{start}}|^2 = |\lbra v^{-}_k \ket{\psi_{start}}|^2 = \frac{1}{2^{n+1}}$.
Therefore, we get
\begin{align}
\sqrt{P[0, t]} 
&= \frac{1}{2^{n+1}} \sum_{k\in \{0, 1\}^n}  \lr{ \e^{ \imagI t w_m} + \e^{- \imagI  t w_m} }
 \\
&= \frac{1}{2^n} \sum_{k\in \{0, 1\}^n} 
\cos w_{m}t \\
 &= \frac{1}{2^n} \sum_{m=0}^n \binom{n}{m}\cos w_mt \\
&= \frac{1}{2^n} \sum_{m=0}^n \binom{n}{m}\cos\left(t \cdot \arccos \left(1-2\cdot \frac m n\right)\right) \\
&= \frac{1}{2^n} \sum_{m=0}^n \binom{n}{m} T_t\left(1-2\cdot \frac m n\right). \label{eq:3.26}
\end{align}

 If $t$ is odd, then so is $T_t$ and  the terms in \eqref{eq:3.26} cancel out; otherwise,
using \eqref{eq:TtoJ} with $z=(n-2m)/n$ for  even integer $t$, 
\begin{align}
 \frac{1}{2^n}  \sum_{m=0}^{n} \binom{n}{m}
 T_t \left (1- \frac{2 m}{n}\right)  
& = (-1)^{t/2} t \int_0^{\infty} x^{-1} J_{t}(x)  \sum_{m=0}^{n}   \frac{1}{2^n} \binom{n}{m} \cos \left (x -  \frac{2x \, m}{n} \right) \dx
\\
& = (-1)^{t/2} t \int_0^{\infty} x^{-1} J_{t}(x)  \cos^n \frac{x}{n}  \dx \label{eq:3.28}
\end{align}
with the last step following from the binomial expansion of $(\e^{\imagI x/n}+\e^{-\imagI x/n})^n$.
\end{proof}

The most technical contribution of this manuscript is an upper bound of the integral appearing in \cref{thm:P0t}, provided that $t/n$ is suitably bounded. In what follows, we use $\nu\in \mathbb R$  instead of $t$ (since the latter was assumed  to be an even integer in our notation before).

\begin{restatable}{theorem}{thesecondthm}\label{thm:L11}
Denote $ a_k =  (k-0.5)\pi $, $ k=1,2,3,\ldots $; let a positive constant $\ratioLower\in (\frac{\pi}6,1)$ be fixed.

	For $ n \geq 2 $ and $\nu>1$, satisfying $ \nu \in (n \ratioLower, n ) \subset  \lr{\frac{ \pi n}{6} ,  n}$,  the following estimates hold:
	\begin{align}
	\textbf{The tail:}\qquad
	& \lrv{ \int_{n a_n}^\infty x^{-1}  J_\nu (x)   \cos^n\lr{ \frac{x}{n}  } \dx   }   < \frac{100    \sqrt{n} }{2^n    }; \label{eq:L11.01}\\
	\textbf{The middle part:}\qquad
	& 	\lrv{ \int_{n a_1}^{na_n} x^{-1}  J_\nu (x)   \cos^n\lr{ \frac{x}{n}  } \dx   }   < \frac{4000 \sqrt n}{1.541^{n} };  \label{eq:L11.02}\\
	\textbf{The bulk:}\qquad
	& \lrv{ \int_{0}^{na_1} x^{-1}  J_\nu (x)   \cos^n\lr{ \frac{x}{n}  } \dx     }  	<  \frac{3}{   (1+\ratioLower)^{0.5\ratioLower n } }. \label{eq:L11.03}
	\end{align}
\end{restatable}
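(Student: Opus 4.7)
The plan is to treat the three integration ranges separately, exploiting different asymptotic regimes for $J_\nu(x)$ together with a simple identity that accounts for the factor $2^{-n}$. The starting point is the elementary identity
\begin{equation}
\int_{-\pi/2}^{\pi/2} \cos(n u)\,\cos^n(u)\,\D u \;=\; \frac{\pi}{2^n},
\end{equation}
obtained by expanding $\cos^n(u) = 2^{-n}\sum_k \binom{n}{k}\cos(2(n-k)u)$ and keeping only the resonant $k=n$ term. After the substitution $u = (x-k\pi n)/n$ inside a peak of $\cos^n(x/n)$ at $x=k\pi n$, the rapidly oscillating factor of the large-argument Bessel asymptotic becomes $\cos(n u + \text{phase})$, and this identity is what produces the exponential factor $2^{-n}$.

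For the tail \eqref{eq:L11.01}, I would use that $x\geq n a_n \approx \pi n^2 \gg \nu$ puts us deep in the standard regime $J_\nu(x) = \sqrt{2/(\pi x)}\cos(x - \nu\pi/2 - \pi/4) + O(x^{-3/2})$. Partitioning $[na_n,\infty)$ into intervals $[na_k, na_{k+1}]$ for $k\geq n$, each containing a single peak of $\cos^n(x/n)$, I factor out the slowly varying $x^{-3/2}$, substitute $u=(x-k\pi n)/n$, and apply the identity above (and its $\sin$ analogue) to extract $2^{-n}$. Summing $(k\pi n)^{-3/2}$ over $k\geq n$ gives a further $O(n^{-2})$ factor, leaving ample slack for the stated bound.

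The middle estimate \eqref{eq:L11.02} uses the same peak-by-peak decomposition, but the plain Bessel asymptotic degrades for peaks with small $k$, where $x$ is only moderately larger than $\nu$. I would use a refined asymptotic with an explicit remainder that is uniform on $x\geq \nu$; the peak contributions now decay only like $\gamma^{-n}$ for some $\gamma \in (1,2)$ determined by the ratio between the asymptotic error and the leading term, and optimizing the trade-off is what determines the concrete constant $1.541$.

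For the bulk \eqref{eq:L11.03} the oscillatory asymptotic fails, because $[0,na_1]$ crosses the turning point $x=\nu$ (since $\nu\in (n\alpha,n)\subset (\pi n/6, n)$). I would invoke the uniform (Debye/Airy-function) asymptotic for $J_\nu$ and split into $x<\nu$ (exponential decay), $x\approx \nu$ (Airy regime), and $\nu<x\leq na_1$ (at most one peak of $\cos^n(x/n)$), bounding each piece by Laplace-type estimates. The rate $(1+\ratioLower)^{-\ratioLower n/2}$ then emerges from the saddle-point value where the Bessel decay competes with $\cos^n(x/n)$. I expect this bulk estimate to be the main obstacle: one must track the constants sharply through the Airy transition so that the chain of inequalities later produces the advertised constant $\constFinal$ in Theorem~\ref{thm:main}.
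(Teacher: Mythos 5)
Your qualitative picture is right -- the $2^{-n}$ does come from the fact that the frequency-$n$ oscillation of $J_\nu$ across each peak of $\cos^n(x/n)$ picks out only the extreme Fourier coefficient of $\cos^n$ -- but the way you propose to extract it does not survive the error terms, and this is precisely the difficulty the paper's proof is built to avoid. Writing $J_\nu(x)=\sqrt{2/(\pi x)}\,\cos(x-\nu\pi/2-\pi/4)+R_\nu(x)$, the available remainder bound in the tail regime is only $|R_\nu(x)|=O(\nu^2 x^{-3/2})$, with no controllable oscillation; since $\nu$ may be as large as $n$ and $x\asymp k\pi n^2$, the remainder is a constant \emph{fraction} (about $1/(2\pi)$) of the leading term, not an exponentially small one. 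Its contribution to one peak is bounded only by $\nu^2 (na_k)^{-5/2}\int\cos^n(x/n)\,\dx=O(k^{-5/2}n^{-5/2})$, and summing over $k\ge n$ gives a quantity that is polynomially, not exponentially, small -- it dwarfs the target $100\sqrt n\,2^{-n}$. The same problem already appears for the leading term itself: the prefactor $x^{-3/2}$ is not constant across a peak of width $\pi n$, and the error made by freezing it at the peak centre again contributes only a polynomially small amount. To rescue the real-axis resonance argument you would need roughly $n$ terms of the asymptotic expansion with explicit remainders, together with a proof that the polynomially-weighted resonance integrals $\int u^j e^{\imagI nu}\cos^n u\,\D u$ still enjoy $2^{-n}$ cancellation -- none of which is sketched, and the last point is delicate because $u^j$ is not periodic. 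The analogous objection applies with more force to your middle-part plan, where "optimizing the trade-off between the asymptotic error and the leading term" is not a mechanism that produces an exponential rate at all.

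The paper's proof sidesteps cancellation entirely by deforming the contour (Cauchy's theorem, \cref{thm:L02}): the integral over $[na_k,na_{k+1}]$ equals an integral over the two vertical rays based at the zeros $na_k$ of $\cos(x/n)$, and there $|\cos((na_k+\imagI y)/n)|^n=2^{-n}(\e^{y/n}-\e^{-y/n})^n$ exhibits the factor $2^{-n}$ \emph{non-oscillatorily}, while $|\hankel(na_k+\imagI y)|\lesssim \e^{-y}|na_k+\imagI y|^{-1/2}$ absorbs the $\e^{y}$ growth. The constant $1.541$ in the middle part then comes not from a real-axis error/leading-term ratio but from the uniform (Olver-type) expansion of the Hankel function along these rays when $|z|\asymp\nu$ (\cref{thm:L01,thm:L06}): the decay falls short of $\e^{-y}$ by $\e^{\nu\,\Im g(w)}$ with $\Im g<0.2607$, and $\e^{0.2607}/2<1/1.541$. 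Your bulk plan is closer to the paper's (which splits at $x=\nu$, uses the monotone bound $|J_\nu(\nu t)|\le J_\nu(\nu)t^\nu\e^{\nu(1-t^2)/2}$ below the turning point and $|J_\nu|\le 1$ above it), but as it stands the argument for the tail and the middle part has a genuine gap.
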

The proof of the \lcnamecref{thm:L11} is deferred to \cref{sec:appendix}; below we roughly  sketch a  reasoning behind the proof.

As   \cref{fig:3.2} might suggest, it is advantageous to divide the whole integral into a sum of integrals over subintervals of the form ${[(k-0.5)\pi n , (k+0.5)\pi n]}$, separating the individual peaks of $\cos^n(x/n)$ and estimating the contribution of each integral separately. It turns out that there are three regimes to consider: 
\begin{enumerate}
  \item When $k \geq n$, the integral over each subinterval  $[(k-0.5)\pi n , (k+0.5)\pi n]$ turns out to be of order $2^{-n} n^{1/2} k^{-5/2}$; the series (taking the sum over $k : k \geq n$) converges, leading to the tail estimate \eqref{eq:L11.01}.
  \item When $1 \leq k < n$, we estimate the integral over each subinterval  $[(k-0.5)\pi n , (k+0.5)\pi n]$ as  $1.541^{-n} n^{-1/2} $, with the constant $1.541$ coming from an upper bound on a certain auxiliary function. Since there are $n-1$ such subintervals, this leads to the  estimate of the middle part \eqref{eq:L11.02}.
  \item In both previous cases, the integral is going to be estimated via an excursion into the complex plane. However, for the remaining interval $(0, 0.5 \pi n)$ this is not a viable option anymore. Instead, here we combine direct bounds on the Bessel function $J_\nu$ and the  $\cos^n(x/n)$ term when $x$ is ``small''. These estimates (in particular, the estimate of the integral over the interval $[0,\nu]$)  lead to the dominating contribution (``the bulk'') \eqref{eq:L11.03}. 
\end{enumerate}

To estimate the integrals of the form $\int_{n a_k}^{na_{k+1}} x^{-1}  J_\nu (x)   \cos^n\lr{ \frac{x}{n}  } \dx $, we form a rectangle in the complex plane, consisting of the real interval ${[n a_k, na_{k+1}]}$, a parallel interval ${[n a_k + \imagI R, na_{k+1} + \imagI R]}$ for some $R>0$, as well as two segments orthogonal to the real axis forming a closed contour. The Cauchy integral theorem then implies  that the integral over the real integral coincides with the integral over the rest of the contour. While the integrand still is oscillatory  over the set ${[n a_k + \imagI R, na_{k+1} + \imagI R]}$, its absolute value vanishes as $R \to +\infty$. Thus, taking the limit $R\to +\infty$, we are left with integrals over the rays of the form $ [na_k, na_k + \imagI \infty) $.

For this strategy to work, we need the integrand to be holomorphic (in a domain containing the integration contour). We note that the integrand $  x^{-1}  J_\nu (x)   \cos^n\lr{ \frac{x}{n}  } $ is the real part of
the function
\begin{equation}\label{eq:f_def}
    f(z)  =   \frac{\hankel(z) \cos^n(z/n) }{z}, \quad  z\in \mbb C,\ \lrv z > 1,
\end{equation}
 where  $ \hankel $ is  the Hankel function of the first kind\footnote{By slightly abusing terminology, we will refer to $ \hankel $ as  \textit{the} Hankel function.} (and satisfies $\Re  \hankel(x) =  J_\nu(x) $ for real $x$). Now $ f $ is holomorphic in the necessary domain, and we make use of the estimate
 \begin{equation}
     \lrv{\int_{n a_k}^{na_{k+1}} x^{-1}  J_\nu (x)   \cos^n\lr{ \frac{x}{n}  } \dx }
 =
 \lrv{
  \Re \int_{n a_k} ^{n a_{k+1}}
f(x) \dx 
 }
 \leq  \lrv{
 \int_{n a_k} ^{n a_{k+1}}
 f(x) \dx 
 }.
 \end{equation}
 Finally, to bound the values of $f$ on the rays $ [na_k, na_k + \imagI \infty) $, we  will employ asymptotic expansions of the Hankel function $\hankel(z)$. The case of large $z$ is straightforward \cite[\href{https://dlmf.nist.gov/10.17.iv}{\S 10.17(iv)}]{DLMF}. However,  when the argument $z$ and order $\nu$ are of similar magnitude, there is no suitable expansion of the Hankel function with error bounds (to the best of our knowledge). This leads to the most involved argument in this paper, where we relate the Hankel function to the modified Bessel function of the second kind, then  consider its asymptotic expansion and bound the relevant  error term, as detailed in \cref{sec:4.0}.

\subsection{Concluding the proof}
Finally, we are ready to show \cref{thm:main}. As explained at the beginning of \cref{sec:proof},  the overall strategy for bounding $P(x,t)$  is to consider two cases, depending on  the Hamming weight of  the vertex $x$. Now we sketch an informal outline of the remaining argument.

Let  $c<  0.5$ be a constant (its value to be determined); consider two cases: when $ \lrv x  $ is between $cn$ and $(1-c)n$ (then the binomial coefficient $\binom{n}{\lrv x}$ is ``large'') and when $\lrv x < cn$ (then the vertex $x$ is ``close'' to the initial vertex). We also set $t=  (1-c)n$ (thus the case $\lrv x > (1-c)n$ need not be considered) and $\alpha=(1-2c)$ (the constant appearing in \cref{thm:L11}).
\begin{enumerate}
    \item In the former case, we apply    $P(x, t)\leq 1/{\binom{n}{\lrv x}} \leq  1/{\binom{n}{{cn}}}$. The latter quantity is (roughly; for a precise statement, see \eqref{eq:3.44}) bounded by 
    \begin{equation}\label{eq:3.34}
    2^{-H(c) n}   
				= 
\lr{ c^{c} (1-c)^{(1-c)} }^{n}.
\end{equation}

    \item In the latter case, we rely on the already proven relationship between $P(x,t)$ and $P(0^n,t')$ for some $t' \in [t-cn, t+cn] \subset [(1-2c)n, n] $; specifically, we apply \cref{eq:lemma,thm:P0t} and intend to upper-bound the integral via \eqref{eq:L11.03} (which is the dominating term)  with $\alpha=(1-2c)$. This (again, roughly; the precise statement is \eqref{eq:3.45}) leads to the following upper bound on $P(x,t)$: 
    \begin{equation}\label{eq:3.35}
     \lr{ \frac{\e (1-c)^{1-c}}{ (2-2c)^{(1-2c)}} }^n
 =
 \lr{ \frac{\e (1-c)^{c}}{2^{(1-2c)}} }^n.
\end{equation}
\end{enumerate}
Now by  balancing the estimates \eqref{eq:3.34} and \eqref{eq:3.35}, we obtain the equilibrium value $c = 0.133682\ldots $. However, the above reasoning is faulty: it ignores the fact  that
\cref{thm:L11} requires
$t'$ to be within the \emph{open} set $(\alpha n, n)$. Also, $1/{\binom{n}{{cn}}}$ is, in fact, not upper-bounded by $2^{-H(c) n}$  (a polynomial factor  is missing). To fix these issues, we choose slightly smaller values of $\alpha $ and $t$, leading to a slightly worse overall estimate.

This leads to the following proof of 
\cref{thm:main} (restated here for convenience).
\themainthm*
\begin{proof}
We   assume that $n$ is sufficiently large, i.e., $\lr{0.86628n,0.86632n} \cap \mbb  Z \neq \emptyset$ and   $n > \max\{ n_0,1 \} $, where $n_0$ is described below.
 Let  
$t$ be from the indicated range; set $w' = \constC n $ and 
	$\ratioLower=\constA$ in \cref{thm:L11}. The chosen constants ensure that $[t-w', t+w']  \subset   (\constA n ,n)  $.

	Then $(1+\ratioLower)^{0.5\ratioLower} > 1.22302 $ and \cref{thm:L11} gives the estimate (for $t' \in (\alpha n, n)$)
\begin{equation}
    \lrv{ \int_{0}^\infty x^{-1}  J_{t'} (x)   \cos^n\lr{ \frac{x}{n}  } \dx }
<
\frac{100     \sqrt{n} }{2^n    }
+
\frac{4000 \sqrt n}{1.541^{n} }
+
\frac{3}{   1.22302^{n} }
=
O \lr{    1.22302^{-n }  }.
\end{equation}
This together with \cref{thm:P0t} implies that there are positive constants $C_0, C_1$ and a positive integer $n_0$ such that  for all $n>n_0$ and all $t' \in (\alpha n, n) = (\constA n ,n) $ the probability $P[0,t']$ can be estimated as
    \begin{equation}
        P[0,t']
    <  \frac{C_0 n {t'}^2} {   1.22302^{2n} } 
    \leq \frac{ C_0 n^3}{1.49578\ldots^n}
    <  \frac{C_1}{1.49578^n}, \quad 
    t' \in (\constA n ,n) .
    \end{equation}
In particular,
\begin{equation}
    P[0,t']
 <  \frac{C_1}{1.49578^n} \quad \text{for all } t' \in [t-w', t+w']  
 \subset   (\constA n ,n).
\end{equation}

Fix any $x \in \{0,1\}^n$ and let $w = |x|$. 
		If $w >n- w'>   t$, then the discrete-time quantum walk cannot reach $x$ in $t$ steps, thus  $P(x,t)=0$. Therefore there are two possibilities to consider:
    \begin{enumerate}
        \item $w \leq w'$, then \cref{eq:symm} and \cref{eq:lemma} with $p_0 = C_1 \cdot 1.49578^{-n} $ give
        \begin{equation}
            P(x,t) = \frac{P[w,t]}{\binom n w } 
			\leq p_0  \frac{n^w  (n-w)! }{ n!}
			<
			 \frac{C_1}{1.49578^n}  \cdot n^{n c}  \cdot \frac{(n-nc)!}{n!}
        \end{equation} 
		with $c:= w'/n = \constC$.
			{Lower and upper bounds} \cite{robbins1955} on the factorials  yield
			\begin{equation}
			    n!  \geq   \lr{ \frac{n}{\e } }^n   \e^{\frac 1 {12n+1}} \sqrt{ 2\pi n}  
			 \, \text{ and } \, 
			 (n-nc)! \leq \lr{ \frac{n(1-c)}{\e } }^{n(1-c)}    \e^{\frac 1 {12n(1-c)}} \sqrt{ 2\pi n(1-c)}  ,
			\end{equation}
			leading to
			\begin{equation}
			    \frac{ (n-nc)! }{n!} \leq  
				\lr{ \frac{ n ^{(1-c)} (1-c)^{(1-c)} \cdot \e}{\e^{(1-c)}  \cdot n} }^{n}  \e^{\frac 1 {12n(1-c)} - \frac 1 {12n+1}}
				\sqrt{ (1-c)}.
			\end{equation}
			Now, since $  \e^{\frac 1 {12n(1-c)} - \frac 1 {12n+1}} < \e^{\frac 1 {12(1-c)}}< 2$ and $\sqrt{ (1-c)} < 1$, 
				we have
				\begin{equation}
				    \frac{ (n-nc)! }{n!}  
				< 
				2n^{-nc} \e^{nc} (1-c)^{(1-c)n} 
				< 2\cdot  0.99068^{-n} n^{-nc},
				\end{equation}
		because  $\e^c (1-c)^{(1-c)} = 0.990681\ldots^{-1} < 0.99068^{-1}$.
				 Let $C_2= 2C_1$, then we conclude
				 \begin{equation}\label{eq:3.45}
				     P(x,t) < 
				\frac{C_2}{(1.49578 \cdot 0.99068)^n   }
				<
				\frac{C_2}{\constFinal^n}
				=O\lr{\constFinal^{-n}}.
				 \end{equation}

        \item $ w' < w  \leq  n-w' $, then
        \eqref{eq:symm}
        and a 
        lower bound \cite[Lemma 9.2]{mitzenmacher_upfal_2005}  on the binomial coefficient in terms of the binary entropy function $H$ gives
        \begin{equation}\label{eq:3.44}
             P(x,t) 
        \leq  \binom{n}{w'}^{-1} 
        \leq (n+1)2^{-n H(w'/n)} =
        (n+1) \lr{ 2^{H(\constC)} }^{-n}.
        \end{equation}
        Since $ 2^{H(\constC)} =1.48189\ldots$, we arrive at 
        $  P(x,t)  =
        (n+1) \cdot 1.48189\ldots^{-n}
        = O(\constFinal^{-n}) $.
    \end{enumerate} 
\end{proof}

\section{Summary and outlook}

We have shown that quantum walk on the hypercube quickly disperses over vertices so well that no 
vertex has more than an exponentially small part of the quantum state on it. This dispersion property is 
significantly stronger than the standard mixing property which requires that the walk has spread almost
uniformly over the vertices but allows significant spikes on particular vertices.

Our computer simulations show that, after $O(n)$ steps of the standard discrete time quantum walk on a $n$-dimensional hypercube, the probability of being at any vertex of the hypercube is at most $1.93...^{-n}$. 
Since the $n$-dimensional hypercube has $2^n$ vertices, this dispersion is close to the maximum possible. While there is a number of results about fast mixing of quantum walks~\cite{levin2017markov,randall2006rapidly}, such strong dispersion results have been rare.
Rigorously, we can show that the probability of the walker being at any vertex is $1.4818...^{-n}$. The proof uses an intricate argument about asymptotics of Bessel and Hankel functions. 

All of those results are for a starting state where the walker is localized in one vertex of the hypercube and the initial direction for the walker is the uniform superposition of all $n$ possible directions. 
For the case when the initial direction of the walker is a basis state corresponding to one direction, 
the quantum walk shows an oscillatory localization, with a large fraction of the walker's state staying either in the starting vertex or its neighbouring vertex, depending on the step of the walk \cite{OscillatoryLoc}.

A particular application of our results would be to show an exponential advantage for quantum algorithms in the query model for the case when the main non-query transformation is a quantum walk, along the lines of \cite{brandao2013exponential,aaronson2018forrelation}. 
A technical difficulty here is that the dispersion result only holds for starting states where the direction register of the walker is in the uniform superposition. Thus, proving an exponential advantage requires generalizing the conditions from \cite{brandao2013exponential} for achieving an exponential advantage, which is a subject of future work.

\section*{Acknowledgements}
We thank Ashley Montanaro for suggesting the problem and the motivation for studying it and Raqueline Santos for participating in the early work on this subject.
This work has been supported by Latvian Council of Science (project no.\ lzp-2018/1-0173).

\appendix

\section{Bounding the integral}\label{sec:appendix}

\subsection{Preliminaries}
Throughout the proof, we will make use of standard notation of some special functions; $ \Gamma $ will stand for the gamma function;  $ B$ denotes  the beta function,  satisfying  $ B(x,y) = \Gamma(x)\Gamma(y)/\Gamma (x,y)$. The Bessel functions of the first  and second kind will be denoted as  $ J_\nu(x) $ and $ Y_\nu(x) $, respectively;  and the Hankel function of the first kind  is denoted as $ \hankel (x)  =  J_\nu(x)  +  \imagI Y_\nu(x) $. As in the statement of the \lcnamecref{thm:L11}, 
we denote $ a_k =  (k-0.5)\pi $, $ k=1,2,3,\ldots $, and let a positive constant $\ratioLower\in (\frac{\pi}6,1)$ be fixed. Furthermore, let $ \nu >0 $ and suppose an integer $ n \geq 2  $
satisfies
\[
n \geq  \frac{1}\ratioLower
\quad\text{and}\quad 
\ratioLower < \frac{\nu}{n} < 1.
\]
Also, for $ a,b \in \mbb R $, with $ a<b $,  we define $ D_{a,b}  =  \lrb{  x+ \imagI y \ \vline\   x \in [a,b],  \  0\leq y < \infty  }  $ (i.e.,  $ D_{a,b} $ stands for an  ``infinite rectangle'' in the complex  plane with one side being the real interval $ [a;b] $).

\subsection{Asymptotic expansions of the Hankel function}
\paragraph{The case of large argument.}
	When $ z \in \mbb C $,  $ \lrv{\arg z} \in [ 0, \pi - \delta) $ for an arbitrary small  positive constant $ \delta $,
we have
\cite[\href{https://dlmf.nist.gov/10.17.iv}{\S 10.17(iv)}]{DLMF}
\begin{equation}\label{eq:1.2.01}
	\hankel(z) = \sqrt{\frac{2}{\pi z}} \,  \e^{\imagI (z - \nu \pi/2 - \pi/4)} \lr{  1 + \rho(\nu, z) },
\end{equation}
where the principal branch of $ \sqrt z $ is used,
and
\begin{equation}\label{eq:1.2.02}
	\lrv{\rho(\nu,z) }  \leq    \lr{ \nu^2 -1/4}  \lrv z^{-1} \exp\lr{ \lr{ \nu^2 -1/4} \lrv z^{-1}  }.
\end{equation}

For future reference, we note that whenever $z$ additionally satisfies $\lrv z \geq n^2$, we have (since $\nu   <n$)
\begin{equation}\label{eq:1.2.03}
	\lrv{\rho(\nu,z) }  <     \nu^2    \lrv z^{-1} \exp\lr{ \nu^2   \lrv z^{-1}  }  < \e 
\end{equation}
and
\begin{equation}\label{eq:1.2.04}
	\lrv{\hankel(z)} \leq  \sqrt{\frac{2}{\pi }}  \lr{  1 + \e  } \lrv { \e^{\imagI z }} \lrv z^{-1/2} <  \frac{3 \e^{-\Im(z)}}{\lrv z^{1/2}}.
\end{equation}

\paragraph{The case when argument and order are of similar magnitude.}
	\begin{lemma}\label{thm:L01}
	When  $ \arg w \in [0; \pi /2) $ and $ \Re  w > 1  $, we have
	\begin{equation}\label{eq:L01e01}
		\hankel(\nu w) 
		=
		-\imagI \,  \sqrt{ \frac{2}{ \pi \nu } } \,  \frac{\e^{ \imagI \nu (   \sqrt{w^2- 1}  - \arccos (1/w)    ) }}{(1-w^2)^{0.25}} \lr{ 1+ \eta(\nu,-\imagI w)},
	\end{equation}
	where the fractional powers and the logarithm take their principal values on the positive real axis 
	and the term $ \eta  $ can be bounded as
	\[ 
	\lrv{ \eta(\nu, - \imagI w)}   \leq \exp \lr{ \frac{2 \mc V_{+\infty, - \imagI w}}{ \nu}  } \, \frac{2 \mc V_{+\infty, - \imagI w}}{ \nu}  ,
	\]
	and the quantity $  \mc V_{+\infty, - \imagI w}$ satisfies
	\[ 
	\mc V_{+\infty, - \imagI w}  \leq \frac{1}{12} + \frac{1}{6 \sqrt 5} + \lr{\frac{4}{27}}^{1/4} + \frac{c^2(c^2+2)}{\sqrt 8 (c^2-1)^{2.5}}, 
	\qquad   c:= \Re w > 1, \  \Im w \geq 0.
	\]
	
	Moreover, when  $ \nu \geq 1 $ and  $ w \in \mbb C $ satisfies $ \Re(w) \geq \pi /2 $ and  $   \Im(w) \geq 0 $, we have 
	\begin{equation}\label{eq:L01e02}
		\lrv{ 1+ \eta(\nu,-\imagI w)} \leq 430.
	\end{equation}
\end{lemma}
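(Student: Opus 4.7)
\medskip

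\textbf{Proof plan.} The plan is to relate $\hankel(\nu w)$ to the modified Bessel function of the second kind $K_\nu$, use a known uniform asymptotic expansion for $K_\nu(\nu\cdot)$ with an Olver-type error bound expressed as a total variation, and then carry out the variation estimates explicitly to obtain the two claims.

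\emph{Step 1 (reduction to $K_\nu$).} I would start from the identity
\begin{equation*}
  \hankel(z) \;=\; \tfrac{2}{\pi\imagI}\, \e^{-\imagI \nu\pi/2}\, K_\nu(-\imagI z),
\end{equation*}
valid on the sector in which $-\imagI z$ lies in the principal half-plane $\Re > 0$. The hypothesis $\arg w \in [0,\pi/2)$, $\Re w > 1$ places $-\imagI w$ in a sector where the uniform expansion of $K_\nu(\nu\cdot)$ is valid, so setting $z = \nu w$ reduces the task to an asymptotic statement about $K_\nu(\nu(-\imagI w))$.

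\emph{Step 2 (uniform expansion and identification of the phase).} I would invoke the standard Liouville--Green / Olver uniform expansion
\begin{equation*}
  K_\nu(\nu\zeta) \;=\; \sqrt{\tfrac{\pi}{2\nu}}\,\frac{\e^{-\nu\,\xi(\zeta)}}{(1+\zeta^2)^{1/4}}\bigl(1+\eta(\nu,\zeta)\bigr),
  \qquad \xi(\zeta) = \sqrt{1+\zeta^2}+\log\!\tfrac{\zeta}{1+\sqrt{1+\zeta^2}},
\end{equation*}
and then substitute $\zeta = -\imagI w$. A short branch-tracking computation converts $\sqrt{1+\zeta^2}$ into $-\imagI\sqrt{w^2-1}$ and $\log\!\bigl(\zeta/(1+\sqrt{1+\zeta^2})\bigr)$ into $\imagI\arccos(1/w)$, yielding the phase $\imagI\nu(\sqrt{w^2-1}-\arccos(1/w))$ and the prefactor $(1-w^2)^{-1/4}$ exactly as stated in \eqref{eq:L01e01}. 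Combining with the $\tfrac{2}{\pi\imagI}\e^{-\imagI\nu\pi/2}$ factor from Step 1 produces the prefactor $-\imagI\sqrt{2/(\pi\nu)}$.

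\emph{Step 3 (Olver-type error bound).} For the expansion above the classical error-control bound (see Olver or DLMF \S 10.41) gives
\begin{equation*}
  |\eta(\nu,\zeta)| \;\leq\; \exp\!\Bigl(\tfrac{2\mc V_{+\infty,\zeta}}{\nu}\Bigr)\,\tfrac{2\mc V_{+\infty,\zeta}}{\nu},
\end{equation*}
where $\mc V_{+\infty,\zeta}$ is the total variation along an admissible (progressive) path from $\zeta$ to $+\infty$ of the function obtained from the first Olver coefficient $U_1(p)$ with $p=1/\sqrt{1+\zeta^2}$. This is precisely the form asserted in the lemma.

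\emph{Step 4 (bounding $\mc V_{+\infty,-\imagI w}$).} I would choose the path from $-\imagI w$ first vertically up to $\Re w$ on the real axis and then horizontally to $+\infty$. Writing $p = 1/\sqrt{1+\zeta^2}$ and integrating $|dU_1(p)|$ along each piece, the horizontal tail contributes the terms $\tfrac{1}{12}$, $\tfrac{1}{6\sqrt5}$, $(4/27)^{1/4}$ coming from standard elementary majorants for $U_1$ on the positive real axis (splitting the real axis at values where the derivative of $U_1(p(t))$ changes monotonicity); the vertical segment from $-\imagI w$ to $\Re w$ is the delicate piece, since near $|w|=1$ the integrand blows up. A direct estimate of the vertical segment yields the singular contribution $\tfrac{c^2(c^2+2)}{\sqrt8\,(c^2-1)^{5/2}}$ with $c=\Re w>1$. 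The four summed contributions give the stated bound.

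\emph{Step 5 (the crude bound $|1+\eta|\leq 430$).} For $\Re w\geq \pi/2$ the quantity $c^2-1\geq \pi^2/4-1$ is bounded away from zero, so $\mc V_{+\infty,-\imagI w}$ is bounded by an absolute constant $V_0$ that I would compute numerically from Step 4. Using $\nu\geq 1$, one obtains $|\eta|\leq 2V_0\e^{2V_0}$, and a direct numerical check shows $1 + 2V_0\e^{2V_0} \leq 430$, giving \eqref{eq:L01e02}.

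\emph{Main obstacle.} The technically delicate step is Step 4: the variation integral on the vertical segment passes arbitrarily close to the branch-point singularity at $|w|=1$, and to extract the explicit singular factor $(c^2-1)^{-5/2}$ one must carry out a careful pointwise majorant of $|U_1'(p)\cdot p'(\zeta)|$ along that segment. Keeping all constants explicit (rather than hiding them in $O$-notation) and verifying that the chosen path is admissible for Olver's bound in the complex plane will require the most work.
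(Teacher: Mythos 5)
Your overall strategy coincides with the paper's: reduce $\hankel(\nu w)$ to $K_\nu(\nu(-\imagI w))$ via the connection formula, invoke Olver's uniform expansion of $K_\nu(\nu\zeta)$ with the error controlled by the total variation of $U_1(p)$, $p=(1+\zeta^2)^{-1/2}$, along a progressive path, and track branches to recover the stated phase. Steps 1--3 and 5 are essentially what the paper does. The gap is in Step 4, which is also the crux of the lemma. You propose a path that goes from $\zeta=-\imagI w$ ``vertically up to the real axis'' and then out to $+\infty$. In the central case $w=c$ real with $c>1$ (which is exactly the case needed downstream, where $w=na_k/\nu+\imagI y/\nu$ starts on the real axis), we have $\zeta=-c\imagI$, and the vertical segment from $-c\imagI$ to $0$ passes \emph{through} the turning point $\zeta=-\imagI$, where $p=(1+\zeta^2)^{-1/2}$ and hence $U_1(p)\sim -5p^3/24$ blow up; the total variation along such a path is infinite and Olver's bound yields nothing. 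This is precisely why the paper rejects the standard vertical path and instead travels \emph{parallel to the real axis} from $z_0=-c\imagI$ to $z_1=c(1-\imagI)$ before joining the usual path to $+\infty$; along that horizontal segment $\lrv{\zeta\pm\imagI}\geq c\mp 1$, which is what produces the explicit singular factor $(c^2-1)^{-5/2}$ you quote. Your claim that ``a direct estimate of the vertical segment yields'' that factor cannot be carried out for the path you describe.

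A second, related omission: Olver's error bound is only valid along a $\xi$-progressive path, i.e.\ one along which $\Re\,\xi$ is monotone. You flag this as ``requiring work'' but do not address it, and it is not a routine check --- the paper devotes a dedicated argument (via the Cauchy--Riemann equations and showing $\Re(\omega\bar z)>0$ with $\omega=\sqrt{1+z^2}$) to proving that its horizontal detour is progressive; one would also have to verify this for whatever replacement path you settle on. Finally, since $\arg(-\imagI w)=-\pi/2$ when $w$ is real, the expansion of $K_\nu(\nu z)$ must be extended to the boundary ray $\arg z=-\pi/2$ (the paper cites Olver, Ch.~10, \S 8.2, with the caveat that $\lrv z$ be bounded away from $1$), which your Step 2 passes over. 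Until the path is chosen so as to avoid $\zeta=-\imagI$ and is proved progressive, the quantitative bound on $\mc V_{+\infty,-\imagI w}$ --- and with it both displayed estimates of the lemma --- is not established.
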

The proof is postponed until \cref{sec:4.0}.

\subsection{Auxiliary lemmata}
Here we list a few somewhat disjoint auxiliary results that will be useful in the subsequent analysis. 
\begin{lemma}\label{thm:L02}
	Suppose that $ f : D \to \mbb C $ is holomorphic in $ D $, where $ D $ is a domain containing the region $D_{a,b} :=   \lrb{  x+ \imagI y \ \vline\   x \in [a,b],  \  0\leq y < \infty  } $, for some reals  $ a<b $.  Moreover, assume that 
	\begin{enumerate}
		\item $ \lim\limits_{R \to +\infty}  \sup_{x \in [a,b]}  \lrv{ f(x + \imagI R)}   = 0 $, and
		\item integrals $ \int_0^\infty  f(a + \imagI y) \dy    $,  $ \int_0^\infty  f(b + \imagI y) \dy  $ converge.
	\end{enumerate}
	Then 
	\begin{equation}\label{eq:L02e01}
		\int_a^b f(x) \dx 
		=
		\imagI  \int_0^ \infty \lr{f(a+\imagI y)  -  f(b+\imagI y) } \dy  .
	\end{equation}
	
\end{lemma}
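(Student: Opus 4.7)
The plan is a standard application of the Cauchy integral theorem on a rectangular contour, followed by a limiting argument. The assumptions 1 and 2 are tailored precisely to justify the passage to the limit.

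First I would fix $R > 0$ and consider the rectangle $\Gamma_R$ with vertices $a$, $b$, $b + \imagI R$, $a + \imagI R$, traversed counterclockwise. Because $D$ is a domain containing $D_{a,b}$, it also contains the closed solid rectangle bounded by $\Gamma_R$ for every $R > 0$, so $f$ is holomorphic on and inside $\Gamma_R$. The Cauchy integral theorem then gives
\begin{equation}
\oint_{\Gamma_R} f(z) \D z = 0.
\end{equation}
Parametrizing the four sides in turn and rearranging, this identity reads
\begin{equation}
\int_a^b f(x) \dx = -\imagI \int_0^R f(b + \imagI y) \dy + \int_a^b f(x + \imagI R) \dx + \imagI \int_0^R f(a + \imagI y) \dy.
\end{equation}

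Next, I would let $R \to +\infty$ term by term. For the ``top'' horizontal piece, the elementary estimate
\begin{equation}
\lrv{ \int_a^b f(x + \imagI R) \dx } \leq (b-a) \sup_{x \in [a,b]} \lrv{ f(x + \imagI R) }
\end{equation}
combined with assumption 1 shows that this contribution vanishes in the limit. For the two vertical pieces, assumption 2 asserts the convergence of $\int_0^\infty f(a + \imagI y) \dy$ and $\int_0^\infty f(b + \imagI y) \dy$, so the respective truncated integrals converge to those limits as $R \to +\infty$. Combining these three facts yields exactly \eqref{eq:L02e01}.

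There is no genuine technical obstacle here: the only thing to verify is that the rectangle lies in the holomorphy domain of $f$, which is immediate from the hypothesis that $D$ contains $D_{a,b}$. I would, however, be slightly careful about orientation and sign conventions when writing the parametrization of the four sides, since a sign error would flip \eqref{eq:L02e01}. Conceptually, the lemma is simply the observation that, under the stated decay condition at $+\imagI\infty$, the horizontal segment $[a,b]$ may be ``pushed up to infinity'' and replaced by the two upward vertical rays emanating from its endpoints, which is exactly the mechanism exploited later to estimate $\int_{na_k}^{na_{k+1}} x^{-1} J_\nu(x) \cos^n(x/n) \dx$ via contour integrals.
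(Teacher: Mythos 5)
Your proposal is correct and follows essentially the same argument as the paper: Cauchy's theorem on the rectangle with vertices $a$, $b$, $b+\imagI R$, $a+\imagI R$, the bound $(b-a)\sup_{x\in[a,b]}\lrv{f(x+\imagI R)}$ on the top edge, and assumption 2 to pass to the limit on the vertical edges. The signs in your rearrangement also match \eqref{eq:L02e01}, so there is nothing to add.
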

\begin{proof}
	For every $ R>0 $ consider the positively oriented rectifiable curve $ \gamma_R $ consisting of the line segments
	\[ 
	[a,b] \cup \lrb{  b+ \imagI y \ \vline\  y \in [0,R] }  \cup \lrb{  x + \imagI R \ \vline\  x \in [a,b] }  \cup \lrb{  a + \imagI y \ \vline\  y \in [0,R] }.
	\]
	By Cauchy's integral theorem   we have $ \oint_{\gamma_R} f(z) \dz = 0  $, i.e.,
	\[ 
	\int_a^b f(x) \dx 
	+
	\imagI \int_0^R f(b+\imagI y) \dy 
	-
	\int_a^b f(x+\imagI R) \dx 
	-
	\imagI \int_0^R f(a+\imagI y) \dy 
	=0.
	\]
	Rearrange this equality as
	\[ 
	\int_a^b f(x) \dx 
	=
	\imagI \lr{\int_0^R f(a+\imagI y) \dy   -    \int_0^R f(b+\imagI y) \dy }  + \int_a^b f(x+\imagI R) \dx 
	\]
	and take the $ R \to \infty $ limit. Since 
	\[ 
	\lrv{\int_a^b f(x+\imagI R) \dx }   \leq  (b-a)   \sup_{x \in [a,b]}  \lrv{ f(x + \imagI R)},     
	\]
	which tends to 0 by the assumptions of $f$,
	we are done.
\end{proof}

\begin{lemma}\label{thm:L03}
	For all $ a>0 $ the following equalities hold:
	\begin{align}
		& \int_{0}^{\infty} \frac{\dy }{(a^2 + y^2)^{3/4}} =   \frac{B(0.5, 0.25) } {  2\sqrt{a}} \approx 2.62206\ldots \cdot a^{-0.5},    \label{eq:L03.01}\\
		& \int_{0}^{\infty} \frac{\dy }{(a^2 + y^2)^{5/4}} =   \frac{B(0.5, 0.75)}{2 \sqrt{a^3}} \approx 1.1984\ldots \cdot a^{-1.5}.   \label{eq:L03.02}
	\end{align}
\end{lemma}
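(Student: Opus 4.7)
Both identities are instances of the one-parameter integral $I_s(a) := \int_0^\infty (a^2+y^2)^{-s}\,\dy$, so the plan is to derive a single closed form for $I_s(a)$ in terms of the Beta function and then specialize to $s=\tfrac34$ and $s=\tfrac54$. The main tool will be the trigonometric representation
\begin{equation}
B(p,q) = 2\int_0^{\pi/2} \sin^{2p-1}\theta\, \cos^{2q-1}\theta\, \D\theta, \qquad p,q>0,
\end{equation}
which is the classical identity obtained by the change of variables $u=\sin^2\theta$ in the Eulerian integral for $B$.

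To evaluate $I_s(a)$ I would use the substitution $y=a\tan\theta$ with $\theta\in[0,\pi/2)$, so that $a^2+y^2=a^2\sec^2\theta$ and $\dy = a\sec^2\theta\,\D\theta$. The integrand collapses to $a^{1-2s}\cos^{2s-2}\theta\,\D\theta$, and matching $2q-1=2s-2$ (hence $q=s-\tfrac12$) with $p=\tfrac12$ in the formula above yields
\begin{equation}
I_s(a) = \frac{a^{1-2s}}{2}\, B\!\left(\tfrac12,\, s-\tfrac12\right), \qquad s>\tfrac12.
\end{equation}
Setting $s=\tfrac34$ then gives \eqref{eq:L03.01} with constant $\tfrac12 B(\tfrac12,\tfrac14)$, and $s=\tfrac54$ gives \eqref{eq:L03.02} with constant $\tfrac12 B(\tfrac12,\tfrac34)$. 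The $a$-dependence is in any case forced by the scaling $y\mapsto au$, which immediately pulls a factor $a^{1-2s}$ out of the integral, so only the numerical constant requires the substitution.

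There is no serious obstacle, as the lemma is purely a change-of-variables exercise; the only thing to check beyond the substitution is that the Beta integral converges, which requires $s-\tfrac12>0$, and both $s=\tfrac34$ and $s=\tfrac54$ satisfy this. The decimal approximations $2.62206\ldots$ and $1.1984\ldots$ reduce via $B(p,q)=\Gamma(p)\Gamma(q)/\Gamma(p+q)$ to evaluations of $\Gamma$ at $\tfrac14$, $\tfrac12$, $\tfrac34$, $\tfrac54$, using $\Gamma(\tfrac12)=\sqrt\pi$ together with the recurrence $\Gamma(\tfrac54)=\tfrac14\Gamma(\tfrac14)$. As a backup plan avoiding trigonometry, I could instead substitute $t=y^2/(a^2+y^2)$, for which a direct computation gives $\dy = \tfrac{a}{2}\, t^{-1/2}(1-t)^{-3/2}\,\D t$ and $(a^2+y^2)^{-s}=a^{-2s}(1-t)^s$, so that the integral becomes $\tfrac{a^{1-2s}}{2}\int_0^1 t^{-1/2}(1-t)^{s-3/2}\,\D t$, recovering the same Beta function directly from its Eulerian integral definition.
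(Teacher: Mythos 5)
Your proof is correct and takes essentially the same approach as the paper: both reduce the integral to a standard integral representation of the Beta function via a one-line change of variables. The paper substitutes $t=y^2/a^2$ and invokes $B(x,y)=\int_0^\infty t^{x-1}(1+t)^{-x-y}\,\dt$, whereas you use $y=a\tan\theta$ and the trigonometric representation (your backup substitution $t=y^2/(a^2+y^2)$ is closer still to the Eulerian form); all variants yield the same constant $\tfrac12 B\lr{\tfrac12,s-\tfrac12}$.
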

\begin{proof}
	The first equality can be proven as follows: substitute $ t=a^2y^2 $, which gives $ \dy  = 0.5 a  t^{-1/2}  \, \dt $. Then
	\[ 
	\int_{0}^{\infty} \frac{\dy }{(a^2 + y^2)^{3/4}}
	=
	\frac{a}{2} \cdot \frac{1}{a^{3/2}}  \int_{0}^{\infty} \frac{ t^{-1/2} \dt }{(1+t)^{3/4}}
	=
	\frac{1}{2 a^{1/2}}\int_{0}^{\infty} \frac{ t^{1/2-1} \dt }{(1+t)^{1/2 + 1/4}}.
	\]
	Now from the identity  $ B(x,y) = \int_{0}^{\infty} \frac{ t^{x-1} \dt }{(1+t)^{x+y}} $  we recognize  the integral  on the RHS as the beta function value  $B(0.5,0.25)  = \Gamma(0.5) \Gamma(0.25) / \Gamma(1.25)  \approx 5.24412\ldots $.  
	In a similar manner one shows the second equality.
\end{proof}

\begin{lemma}\label{thm:L04}
	For all   $ t \in  [0; \pi /2) $ the inequality $ \cos t \leq  \e^{-t^2/2} $ holds.
\end{lemma}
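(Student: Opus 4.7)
The plan is to reduce the claim to a one-variable calculus exercise by taking logarithms. Since $\cos t > 0$ on $[0,\pi/2)$, the inequality $\cos t \leq \e^{-t^2/2}$ is equivalent to
\begin{equation}
g(t) := \ln \cos t + \tfrac{t^2}{2} \leq 0 \quad \text{for all } t \in [0, \pi/2),
\end{equation}
so it suffices to prove this for $g$.

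First I would compute $g(0) = 0$, so the boundary value is correct. Next, I would differentiate: $g'(t) = -\tan t + t$, which also satisfies $g'(0) = 0$. Differentiating once more gives $g''(t) = 1 - \sec^2 t = -\tan^2 t \leq 0$ for all $t \in [0,\pi/2)$. Thus $g'$ is non-increasing on $[0, \pi/2)$, so $g'(t) \leq g'(0) = 0$ throughout the interval. In turn, $g$ itself is non-increasing and hence $g(t) \leq g(0) = 0$, which is the desired inequality.

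There is essentially no obstacle here: the only mild point to check is that $\ln \cos t$ is well defined on $[0,\pi/2)$ (it is, because $\cos t \in (0,1]$ on this interval), and that $g, g', g''$ are all continuous there, which justifies the monotonicity step from $g'' \leq 0$. An entirely analogous argument would work via comparing Taylor coefficients of $e^{-t^2/2} - \cos t$, but the two-step derivative argument above is shorter and avoids the need to control an infinite tail.
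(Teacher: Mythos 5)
Your proof is correct and follows essentially the same route as the paper: both take logarithms, define $g(t)=\ln\cos t + t^2/2$, and show $g'(t)=t-\tan t\leq 0$ on $[0,\pi/2)$ (the paper asserts $t<\tan t$ directly where you justify it via $g''\leq 0$). No issues.
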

\begin{proof}
	Let $ h(t) = \ln (\cos(t))  + t^2/2$ and consider $ h'(t) = t-\tan(t) $. Since  $ h'(0)=0 $ and $ h'(t)<0 $ for $ t\in (0;\pi/2) $, $ h $ attains its maximum at $ t=0 $, i.e., $ \ln(\cos t)  \leq -t^2/2$ with equality at $ t=0 $. Exponentiating gives the desired result.
\end{proof}

\begin{lemma}\label{thm:L06}
	
	Let  $ g $ be defined in the   set $ \lrb{z \in \mbb C \ \vline\   \Re z  \geq 1,   \Im z \geq 0 } $ via 
	\[
	g(z) = z - \sqrt{z^2-1} + \arccos(1/z),
	\]
	where the square root and the inverse cosine functions take their principal values on the positive real axis.
	Then $ \Im (g(z))  < 0.2607$ for all $ z $ with $ \Re z  \geq 1 $,   $ \Im z \geq 0  $.
\end{lemma}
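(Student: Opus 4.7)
My plan is to reduce the two-variable problem to an analysis of the single-variable function $h(y) := \Im g(1+iy)$ on the vertical boundary. First I would check that $g$ is holomorphic on $\Omega := \{z : \Re z > 1,\ \Im z > 0\}$: the principal branches of $\sqrt{z^2-1}$ and of $\arccos(1/z)$ are well defined there, since their branch cuts (respectively $(-\infty,1]$ and $[-1,0)\cup(0,1]$) lie outside $\overline\Omega$. A direct computation gives $g'(z) = 1 - \sqrt{1-1/z^2}$, whence $\partial_x \Im g(x+iy) = \Im g'(z) = -\Im\sqrt{1-1/z^2}$. For $z = x+iy$ with $x \geq 1$, $y > 0$ one has $\Im(1-1/z^2) = 2xy/|z|^4 > 0$, so (as the principal branch of $\sqrt{\cdot}$ maps the upper half-plane into itself) $\Im\sqrt{1-1/z^2} > 0$, i.e.\ $\Im g$ is strictly decreasing in $x$. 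Since $\Im g \equiv 0$ on $[1,\infty)$, it follows that $\sup_{\overline\Omega}\Im g = \sup_{y\geq 0} h(y)$.

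Next I would locate the critical points of $h$. Writing $w := 1 - 1/(1+iy)^2$ and $\sqrt w = p + iq$ with $p,q \geq 0$, we have $h'(y) = 1 - p$. Expanding yields $\Re w = y^2(y^2+3)/(1+y^2)^2$ and $\Im w = 2y/(1+y^2)^2$. Combining $p = 1$ with $p^2 - q^2 = \Re w$ and $2pq = \Im w$ eliminates $q$ via $q = \Im w/2$ and leaves $(\Im w)^2 = 4(1 - \Re w)$; clearing denominators and substituting $u := y^2$ collapses this to the cubic
\[
    u^3 + u^2 = 1,
\]
which has a unique positive real root $u_\ast \approx 0.7549$. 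Thus $h$ has a unique critical point $y_\ast = \sqrt{u_\ast} \approx 0.8688$; together with $h(0) = 0$, $h(y) \to 0$ as $y \to \infty$, and a small-$y$ expansion giving $h(y) = y - \tfrac{3}{4} y^{3/2} + O(y^{5/2})$, this critical point is the global maximum.

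Finally, I would evaluate
\[
    h(y_\ast) \;=\; y_\ast \;-\; \sqrt{\tfrac{y_\ast(\sqrt{y_\ast^2+4}+y_\ast)}{2}} \;-\; \ln\bigl|(1+i\sqrt{(1+iy_\ast)^2-1})/(1+iy_\ast)\bigr|
\]
and verify that it is below $0.2607$. The true value is $\approx 0.2604$, so the margin is only $\approx 3 \cdot 10^{-4}$. The main technical obstacle is making this final numerical bound rigorous: the cleanest route is to enclose $u_\ast$ in a narrow rational interval by a few bisection steps on $u^3 + u^2 - 1$ (for instance $u_\ast \in [0.7548, 0.7549]$, since $0.7548^3+0.7548^2 < 1 < 0.7549^3+0.7549^2$) and propagate through the elementary operations using interval arithmetic or explicit polynomial error bounds on $\sqrt{\cdot}$ and $\ln$; although the margin is tight, only standard computations with elementary functions of one real variable are needed.
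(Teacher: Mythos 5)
Your proposal is correct and follows essentially the same route as the paper: reduce to the boundary line $\Re z = 1$, locate the unique critical point of $y \mapsto \Im g(1+\imagI y)$ via the cubic $u^3+u^2=1$ with $u=y^2$, and finish with a numerical evaluation at $y_*\approx 0.8688$ (the paper likewise simply cites a numerical computation giving $0.26066\ldots$ at this point). The only difference is cosmetic: you reduce to the line $\Re z=1$ by showing $\partial_x \Im g = -\Im\sqrt{1-1/z^2} < 0$ directly, whereas the paper invokes the maximum principle for the harmonic function $\Im g$ on a large rectangle; both are valid, and your explicit remarks on making the final numerical step rigorous only strengthen the argument.
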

\begin{proof}
	Let $ R>0 $; we will assume that $ R $ is large enough, e.g., $ R>1 $.
	
	Since $ g $ is holomorphic on the domain $  \lrb{  x + \imagI y \ \vline\  x \in (1,R),\  y \in (0,R) }   $ and continuous on its closure, its imaginary part $ \Im(g(z)) $ is harmonic and attains its maximum on the boundary of this region. Moreover, as    $ g(z ) \in \mbb R $ when $ \Im z = 0 $ and
	\[ 
	\lim\limits_{\lrv z \to \infty}
	\lrv{ z - \sqrt{z^2-1}} 
	= 
	0
	\quad\text{and}\quad
	\lim\limits_{\lrv z \to \infty}
	\arccos(1/z)
	= 
	\pi /2,
	\]
	it follows that 
	$ \Im(g(z)) $  must  attain its  maximum   on  the segment with $ \Re z =1 $, $ \Im z \in [0,R] $. 
	
	Let us separate the real and imaginary part of $ g $, i.e., introduce real-valued bivariate functions $ u,v $ satisfying $ g(x+\imagI y)  = u(x,y) + \imagI v(x,y)  $. By the arguments above, we need to show that $ v(1,y)  < 0.2607$ for all $ y \in [0,R] $, for arbitrarily large $ R $. We will show that (for $ R>1 $) the function  $ v(1,\cdot) $ has a single local maximum at $ y_0    \approx 0.86883\ldots $ where its value is less than $  0.2607 $. To calculate  the derivative of $y \mapsto  v(1,y)   $  and show that it is positive  on $ (0,y_0)$ and negative on $ (y_0, R) $, we employ Cauchy-Riemann equations.

	Consider the derivative of $ g(z) $, denoted by $ g^{(1)} (z) $:
	\[ 
	g^{(1)}(z) : =\frac{\D }{\dz}g (z) =1 - \frac{z}{\sqrt{z^2-1}}  + \frac{1}{z^2 \sqrt{1-z^ {-2}}}= 1-  \frac{\sqrt{z^2-1}}{z}.
	\]
	Separate the real and imaginary parts of the derivative, i.e., $ g^{(1)} (x + \imagI y) = u^{(1)}(x,y) + \imagI v^{(1)}(x,y) $, $ x,y \in \mbb R $, then    Cauchy-Riemann equations imply
	\[ 
	\begin{cases}
		u^{(1)}(x_0,y_0) =  \frac{\partial  u}{\partial x}  (x_0,y_0)  =  \frac{\partial  v}{\partial y}  (x_0,y_0)   , \\
		v^{(1)}(x_0,y_0) = -  \frac{\partial  u}{\partial y}  (x_0,y_0)  =  \frac{\partial  v}{\partial x}  (x_0,y_0)   ,
	\end{cases}
	\]
	for any  $ (x_0, y_0) \in (1,R) \times (0,R) $. Moreover, these equations remain also true for $ x_0= 1 $, $ y_0 > 0 $ (this follows from the fact that $ g $ is holomorphic in a neighborhood of any $ z=1+ \imagI y_0  $ with $ y_0>0 $). Since we are interested in $ \frac{\partial  v}{\partial y}  (1,y)  $ for $ y \in (0,\infty) $, consider the real part of $ g^{(1)} (z)$, where $ z = 1+\imagI y $. A direct calculation gives
	\[ 
	\frac{\partial  v}{\partial y}  (1,y)
	=
	u^{(1)}(1,y) = 
	1-\frac{\sqrt{\frac{y^2-1}{{\left(y^2+1\right)}^2}+\frac{\sqrt{y^4+4\,y^2}}{y^2+1}+1}}{\sqrt 2}.
	\]
	It is easy to verify that this expression is positive for $ y \in (0,y_0) $ and negative for  $ y\in (y_0,0) $, where $ y_0 >0 $ is the real  solution of 
	\[ 
	\frac{y\,\sqrt{y^2+4}+1}{y^2+1}-\frac{2}{{\left(y^2+1\right)}^2} = 1.
	\]
	Letting $ t = y^2 $ and simplifying   gives the equation $ t^3 + t^2 = 1 $, whose only real solution is 
	\[ 
	t_0 = 
	\frac{{\left(\frac{25}{2}-\frac{3\,\sqrt{69}}{2}\right)}^{1/3} +{\left(\frac{3\,\sqrt{69}}{2}+\frac{25}{2}\right)}^{1/3} - 1}{3} 
	\approx 0.7548\ldots,
	\]
	thus $ y_0 = \sqrt{t_0 } \approx 0.86883\ldots $ and $ \Im (g(z)) $ attains its maximum value at $ z = 1 +\imagI y_0 $. Finally, numerical calculations yield
	$ g(1+\imagI y_0) = 
	1.29797\ldots +\imagI   0.26066\ldots   $,
	thus $ \Im (g(z))  < 0.2607$ for the values of $ z $ under consideration. 
\end{proof}

\begin{remark}
 See the remark at the end of \cref{sec:4.3} for an interpretation of \cref{thm:L06} in the context of the modified Bessel function of the second kind.
\end{remark}

\subsection{Properties of the main function}
It can be noticed that the integrand $  x^{-1}  J_\nu (x)   \cos^n\lr{ \frac{x}{n}  } $ is the real part of
the function $f $ defined by \eqref{eq:f_def}.
Since $ \hankel $ is holomorphic  throughout the complex plane cut along the negative real axis,   $ f $ is holomorphic in the domain $  \lrb{z \in \mbb C \ \vline\  \Re z \geq \epsilon} $, for an arbitrary $ \epsilon > 0 $.

We intend to estimate the integral of $f$ with the help of \cref{thm:L02} by dividing the integration domain into subintervals with endpoints 0, $na_1$, $na_2$, \ldots; to that end, we will make use of the following properties of $f$.

First we characterize and bound the values of $f$:
\begin{lemma}\label{thm:L08b}
	If $x,y\geq 0$ are nonnegative reals  such that $f$ is defined at $x+\imagI y$, then
	\begin{equation}\label{eq:L08.03}
		f(x+\imagI y)
		=
		\lr{  \cos(x/n) \cosh (y /n) - \imagI \, \sin(x/n) \sinh (y/n)  \vbl  }^n \, \frac{ \hankel(x+\imagI y)}{x  + \imagI y}.
	\end{equation}
	Furthermore, if $ \max\{x,y\} \geq n^2 $, then the following inequality is satisfied
	\begin{equation}\label{eq:L08.02}
		\lrv{f(x + \imagI y) }    \leq  3 \lrv{x + \imagI y}^{-3/2} . 
	\end{equation}
\end{lemma}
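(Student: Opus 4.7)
The lemma has two claims: an exact identity for $f(x+\imagI y)$ and a pointwise bound valid for $\max\{x,y\}\ge n^2$. My plan is to treat them separately, with the identity being a direct rewriting and the bound arising from combining an elementary estimate on $|\cos(z/n)|$ with the large-argument Hankel bound \eqref{eq:1.2.04} already established earlier in the appendix.

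For the identity \eqref{eq:L08.03}, I would start from the definition $f(z) = \hankel(z)\cos^n(z/n)/z$ and apply the standard trigonometric identity $\cos(a+\imagI b) = \cos a\,\cosh b - \imagI\sin a\,\sinh b$, valid for all real $a,b$. Substituting $a = x/n$, $b = y/n$ gives the factor $\cos(x/n)\cosh(y/n) - \imagI\sin(x/n)\sinh(y/n)$, whose $n$th power is exactly the prefactor in \eqref{eq:L08.03}. Dividing by $z = x+\imagI y$ and keeping $\hankel(x+\imagI y)$ intact yields the stated expression. This step is purely routine.

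For the bound \eqref{eq:L08.02}, the key observation is a cancellation between the growth of $|\cos^n((x+\imagI y)/n)|$ in the $y$-direction and the decay factor $\e^{-\Im z}$ in the Hankel estimate. Using the identity from the first part, a short computation gives
\begin{equation}
\lrv{\cos\lr{(x+\imagI y)/n}}^2 = \cos^2(x/n)\cosh^2(y/n) + \sin^2(x/n)\sinh^2(y/n) = \cos^2(x/n) + \sinh^2(y/n),
\end{equation}
which is at most $\cosh^2(y/n)$. Hence $|\cos^n((x+\imagI y)/n)| \leq \cosh^n(y/n) \leq \e^{y}$, where the last inequality follows from $\cosh t \leq \e^t$ for $t\ge 0$. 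The hypothesis $\max\{x,y\}\geq n^2$ forces $|x+\imagI y|\geq n^2$, so the bound \eqref{eq:1.2.04} applies and gives $|\hankel(x+\imagI y)| \leq 3\e^{-y}/|x+\imagI y|^{1/2}$ (since $\Im(x+\imagI y)=y$ as $y\geq 0$). Multiplying the three factors in the definition of $f$, the exponentials $\e^{y}$ and $\e^{-y}$ cancel exactly, leaving
\begin{equation}
\lrv{f(x+\imagI y)} \leq \frac{\e^{y} \cdot 3\e^{-y}}{\lrv{x+\imagI y}^{1/2} \cdot \lrv{x+\imagI y}} = \frac{3}{\lrv{x+\imagI y}^{3/2}}.
\end{equation}

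I do not expect any serious obstacle: the identity is a direct application of $\cos(a+\imagI b)$'s decomposition, and the bound's only nontrivial ingredient is noticing that $|\cos(z/n)|$ grows no faster than $\cosh(y/n)\leq \e^{y/n}$, whose $n$th power matches the exponential decay supplied by \eqref{eq:1.2.04}. The main thing to verify carefully is that the hypothesis $\max\{x,y\}\geq n^2$ indeed guarantees the $|z|\geq n^2$ required to invoke \eqref{eq:1.2.04}, which is immediate from $|x+\imagI y|\geq \max\{x,y\}$ for $x,y\geq 0$.
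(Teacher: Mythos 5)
Your proof is correct and follows essentially the same route as the paper: the identity is the standard decomposition of $\cos(a+\imagI b)$, and the bound combines \eqref{eq:1.2.04} with the observation that $\lrv{\cos^n(z/n)}\le \e^{y}$ cancels the $\e^{-y}$ from the Hankel estimate. The only (immaterial) difference is that you bound $\lrv{\cos(z/n)}$ exactly by $\cosh(y/n)$ via $\cos^2(x/n)+\sinh^2(y/n)$, whereas the paper uses the cruder triangle inequality $\cosh(y/n)+\sinh(y/n)=\e^{y/n}$; both give the same final estimate.
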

\begin{proof}
	The equality in \eqref{eq:L08.03} follows from the definition of $f$ and  the identity
	\[ 
	\cos (x/n+ \imagI y/n)=   \cos(x/n) \cosh (y /n) - \imagI \, \sin(x/n) \sinh (y/n).
	\]  
	To show \eqref{eq:L08.02}, 
	apply \eqref{eq:1.2.04} to bound $ \hankel(x + \imagI y) $ in \eqref{eq:L08.03}  (notice that $ \arg(x + \imagI y) \in [0, \pi/2) $ for the $ x,y $ values under the consideration; moreover, $ \lrv {x + \imagI y} \geq   \max\{x,y\} \geq n^2 $),
	obtaining
	\begin{align*}
		\lrv{f(x + \imagI y) }   
		= & 
		\lrv{ \cos(x/n) \cosh (y /n) - \imagI \, \sin(x/n) \sinh (y/n) \vbl    }^n \cdot
		\lrv{  \frac{  3\e^{-y }      }{\lr{x + \imagI y}^{1.5}} }
		\\ 
		\leq  
		&\frac{  3\e^{-y} \lr{\vbl \lrv{ \cos(x/n) \cosh(y/n)} +  \lrv{\sin (x/y)\sinh(y/n) } }^n}{ \lrv{x + \imagI y}^{1.5}} 
		\\ 
		\leq  
		&   3\e^{-y} \lr{\vbl  \cosh(y/n)  +  \sinh(y/n)  }^n \,  \lrv{x + \imagI y}^{-1.5}
		=
		3 \lrv{x + \imagI y}^{-1.5} . 
	\end{align*} 
\end{proof}

We also characterize the values of $f$  when its argument is of the form $na_k + \imagI y$:
\begin{lemma}\label{thm:L09b}
	For all $ y\geq 0 $ and  integer $ k \geq 1 $ it holds that
	\begin{equation}\label{eq:L08.01}
		f(na_k+\imagI y)
		=
		2^{-n} \,  \frac{ \e^{n\imagI    a_{k+1}} \cdot   \lr{\e^{y/n}- \e^{-y/n}  }^n   \hankel( na_k + \imagI y) }{ na_k + \imagI y},
	\end{equation}
	and, whenever $\lrv{na_k + \imagI y} \geq n^2$, 
	\begin{equation}\label{eq:L09.01b}
		\lrv{f(na_k+\imagI y)}
		< 
		3\cdot 2^{-n} \,  \lrv{na_k + \imagI y}^{-3/2}.
	\end{equation}

	Furthermore,
	\begin{enumerate}
		\item if $1 \leq k < n$, then for all   $ y  \geq  0$ the   following estimate holds:
		\begin{equation}
			\lrv{ f(na_k+\imagI y)} < 860 \cdot 1.541^{-n}   \,  \lrv{ n a_k + \imagI y}^{-1.5}.\label{eq:L09.02a}
		\end{equation} 
		\item if $ 2 \leq n \leq  k $, then for all   $ y  \geq  0$ the   following estimate holds:
		\begin{equation}
			\lrv{ f(na_k+\imagI y) -   f(na_{k+1}+\imagI y)}
			<
			15 n^2  \cdot    2^{-n}    \lrv{ na_{k}   + \imagI y }^{-2.5}
			\label{eq:L09.02}
		\end{equation} 
	\end{enumerate}
	
\end{lemma}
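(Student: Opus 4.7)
For the identity \eqref{eq:L08.01}, I would simply specialize \eqref{eq:L08.03} at $x=na_k$. Then $x/n = (k-\tfrac12)\pi$ gives $\cos(x/n)=0$ and $\sin(x/n) = (-1)^{k+1}$, so the bracket in \eqref{eq:L08.03} collapses to $\imagI(-1)^k \sinh(y/n)$. Raising to the $n$-th power and using $\sinh^n(y/n)=2^{-n}(\e^{y/n}-\e^{-y/n})^n$ produces the prefactor $\imagI^n(-1)^{nk}\cdot 2^{-n}(\e^{y/n}-\e^{-y/n})^n$, and the elementary identity $\imagI^n(-1)^{nk} = \e^{\imagI n(k+\frac12)\pi} = \e^{\imagI n a_{k+1}}$ finishes \eqref{eq:L08.01}. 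The large-modulus estimate \eqref{eq:L09.01b} is then immediate: bound $(\e^{y/n}-\e^{-y/n})^n \le \e^y$ and apply \eqref{eq:1.2.04} to the Hankel factor; the $\e^{-\Im z}=\e^{-y}$ coming from Hankel cancels the $\e^y$ from $\sinh^n$, leaving $3\cdot 2^{-n}|na_k+\imagI y|^{-3/2}$.

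For the small-$k$ bound \eqref{eq:L09.02a} with $1\le k<n$, the modulus $|na_k+\imagI y|$ need not reach $n^2$, so the large-argument expansion \eqref{eq:1.2.01} is unavailable and I would invoke the uniform expansion \eqref{eq:L01e01} with $w=(na_k+\imagI y)/\nu$. Since $\nu<n$ and $a_k\ge \pi/2$, one checks $\Re w > 1$ and $\Re w \ge \pi/2$, so \eqref{eq:L01e02} supplies the universal bound $|1+\eta(\nu,-\imagI w)|\le 430$. The exponent's modulus $\e^{-\nu\Im(\sqrt{w^2-1}-\arccos(1/w))}$ rewrites as $\e^{-\nu\Im(w-g(w))}$ with $g$ from \cref{thm:L06}; that lemma gives $\Im g(w)<0.2607$, hence this modulus is less than $\e^{-y+0.2607 n}$. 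For the algebraic factor I would use $|1-w^2|\ge|w|^2-1\ge |w|^2/2$, valid because $|w|^2\ge(\pi/2)^2>2$, yielding $|1-w^2|^{1/4}\ge |na_k+\imagI y|^{1/2}/(2^{1/4}\nu^{1/2})$. Substituting back into \eqref{eq:L08.01}, the $\nu^{\pm 1/2}$ factors cancel, the $\e^{\pm y}$ factors cancel, and the surviving $n$-dependent base is $\e^{0.2607}/2 < 1/1.541$, producing a bound of the form $C\cdot 1.541^{-n}|na_k+\imagI y|^{-3/2}$ with $C$ comfortably below 860.

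For the large-$k$ bound \eqref{eq:L09.02} with $k\ge n\ge 2$, we have $|na_k+\imagI y|\ge \pi n^2/2 > n^2$, so I would substitute the large-argument expansion \eqref{eq:1.2.01} into \eqref{eq:L08.01}. Using $\e^{\imagI z_k}=(-1)^{nk}(-\imagI)^n \e^{-y}$ for $z_k = na_k+\imagI y$, the phase factors collapse: $\e^{\imagI n a_{k+1}}\hankel(z_k)$ becomes a $k$-independent unimodular factor times $\e^{-y}(1+\rho(\nu,z_k))/\sqrt{z_k}$. Combined with $(\e^{y/n}-\e^{-y/n})^n\cdot \e^{-y}=(1-\e^{-2y/n})^n$, which also does not depend on $k$, the $k$-dependence isolates in $(1+\rho(\nu,z_k))/z_k^{3/2}$. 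Thus $f(na_k+\imagI y)-f(na_{k+1}+\imagI y) = C_n(y)\bigl[(1+\rho(\nu,z_k))/z_k^{3/2}-(1+\rho(\nu,z_{k+1}))/z_{k+1}^{3/2}\bigr]$ with $|C_n(y)|\le 2^{-n}\sqrt{2/\pi}$. The dominant contribution $|z_k^{-3/2}-z_{k+1}^{-3/2}|$ is bounded by $(3/2)n\pi|z_k|^{-5/2}$ via the mean value theorem along the horizontal segment from $z_k$ to $z_{k+1}$ (where $|w|\ge|z_k|$), while each $\rho$-term is individually at most $\e\nu^2/|z|^{5/2}\le \e n^2/|z_k|^{5/2}$ by \eqref{eq:1.2.03}. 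Summing gives a prefactor well under $15 n^2\cdot 2^{-n}|z_k|^{-5/2}$.

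The main technical obstacle is step (3). The uniform Hankel asymptotic of \cref{thm:L01} is delicate to exploit: one must simultaneously recognize $\sqrt{w^2-1}-\arccos(1/w)=w-g(w)$ in order to bring \cref{thm:L06} to bear, and sharpen $|1-w^2|^{1/4}$ with exactly the right $\nu$-scaling to produce the polynomial factor $|na_k+\imagI y|^{1/2}$ which, together with the $|na_k+\imagI y|^{-1}$ already present in \eqref{eq:L08.01}, delivers the advertised $|\cdot|^{-3/2}$ decay. Losing either ingredient would either cost a full power of $|na_k+\imagI y|$ or inflate the exponential base above $1/1.541$. Parts (1), (2), and (4) are routine substitutions and standard large-argument estimates by comparison.
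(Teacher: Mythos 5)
Your proposal is correct and follows essentially the same route as the paper: \eqref{eq:L08.01} by specializing \eqref{eq:L08.03} at $x=na_k$, \eqref{eq:L09.01b} via \eqref{eq:1.2.04} with the $\e^{\pm y}$ cancellation, \eqref{eq:L09.02a} via \cref{thm:L01}, \eqref{eq:L01e02} and \cref{thm:L06} with the same $|w^2-1|^{1/4}\gtrsim |na_k+\imagI y|^{1/2}\nu^{-1/2}$ sharpening, and \eqref{eq:L09.02} via \eqref{eq:1.2.01}, \eqref{eq:1.2.03} and the mean value theorem. The only differences are cosmetic constants (e.g.\ $|w|^2-1\geq |w|^2/2$ versus the paper's $5/9\,|w|^2$), which do not affect the stated bounds.
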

\begin{proof}
	Set $ x = na_k $ in  \eqref{eq:L08.03}; since
	$ \cos (a_k)=0 $,  $ \lr{- \imagI \, \sin(a_k)}^n 	= 	\e^{-\imagI   n a_{k} } 	= 	\e^{\imagI   n a_{k+1} }  $ and
	\[ 
	\sinh(y/n)  = 0.5   \lr{\e^{y/n}- \e^{-y/n}  },
	\]
	we obtain \eqref{eq:L08.01}. Now, if $\lrv{na_k + \imagI y} \geq n^2 $, we can apply \eqref{eq:1.2.04} to bound $ \hankel(x + \imagI y) $ in \eqref{eq:L08.01}, 
	obtaining  the desired estimate 
	\eqref{eq:L09.01b}:
	\[
	\lrv{f(na_k+\imagI y)}
	\leq 
	2^{-n} \,    \lr{\e^{y/n}- \e^{-y/n}  }^n     
	\frac{  3\e^{-y }      }{\lrv{na_k + \imagI y}^{1.5}} 
	<
	3 \cdot 2^{-n} \,  \lrv{na_k + \imagI y}^{-1.5}.
	\]

	\paragraph{Proof of \eqref{eq:L09.02a}.}
	Suppose $1 \leq k \leq n-1$.  	Let  $ w = na_k/ \nu + \imagI y/\nu $ and apply \cref{thm:L01} to  rewrite \eqref{eq:L08.01}  as
	\begin{align*}
		\lrv{ f(na_k+\imagI y)}
		=& 
		\frac{     \lrv{\e^{y/n}- \e^{-y/n}  }^n  \lrv{ \hankel( \nu  w) }}{2^{n} \, \lrv{ na_k + \imagI y}}  \\
		\leq  &  
		\frac{     \e^y    \lrv{ \e^{ \imagI \nu \lr{\sqrt{w^2- 1}  - \arccos (1/w)} } }}{2^{n} \,  \sqrt{ \nu  \pi/2 } \lrv{ na_k + \imagI y}  \lrv{1-w^2}^{0.25}}  \cdot \lrv{   1+ \eta(\nu,-\imagI w)}. \\
		=  &  
		\frac{     \e^{y  - \nu \cdot  \Im  \lr{\sqrt{w^2- 1}  - \arccos (1/w)} }  }{2^{n} \,  \sqrt{ \nu  \pi/2 } \lrv{ na_k + \imagI y}  \lrv{1-w^2}^{0.25}}  \cdot \lrv{   1+ \eta(\nu,-\imagI w)}.
	\end{align*}
	Notice that $ y  - \nu \cdot  \Im   \lr{\sqrt{w^2- 1}  - \arccos (1/w)}    = \nu\, \Im (g(w)) $,  where $ g $ is defined as in \cref{thm:L06}. Moreover,  $ \Re w  \geq   na_1 / \nu  >  \pi/2$, so \eqref{eq:L01e02} from \cref{thm:L01}  applies;
	hence we have
	\begin{equation}\label{eq:L10.01}
		\lrv{ f(na_k+\imagI y)} 
		\leq 
		\frac{430 \e^{ \nu\, \Im (g(w))}    }{2^{n} \,  \sqrt{ \nu  \pi/2 } \lrv{ na_k + \imagI y}  \lrv{1-w^2}^{0.25}}  .
	\end{equation}

	Since  $ \lrv{w} \geq  na_1 / \nu   > \pi/2 > 1.5$,   we have $ 4/9\lrv w^2  > 1 $ and by the triangle inequality,
	\[ 
	\lrv{w^2-1} \geq \lrv{w}^2-1 \geq   5/9 \lrv w^2.
	\]
	Thus
	\[ 
	\lrv{w^2-1} ^{0.25} \geq    \lr{  5/9 \lrv{w}^2       }^{0.25}   >    \lrv{0.5 w}^{0.5}= \lrv{  na_k + \imagI y }^{0.5}  \cdot   {(2\nu)}^{-0.5} .
	\]
	This  allows to estimate 
	\[
	\sqrt{ \nu  \pi/2 } \lrv{ na_k + \imagI y}  \lrv{w^2-1}^{0.25}
	> 0.5\lrv{  na_k + \imagI y }^{1.5}   
	\]
	and simplify \eqref{eq:L10.01} to  
	\begin{equation}\label{eq:L10.02}
		\lrv{ f(na_k+\imagI y)}
		<
		\frac{860 \exp\lr{\nu\, \Im (g(w))}}{ 2^{n}   \lrv{  na_k + \imagI y }^{1.5}}.
	\end{equation}
	From \cref{thm:L06} we have   $ \Im (g(w))  < 0.2607$, since  $ \Re w  \geq  \pi/2  > 1 $,   $ \Im w \geq 0  $.
	From  ${\nu<n}$ and $ \e^{0.2607}/2 <  1.541^{-1}$ we obtain
	\[ 
	\frac{\e^{  \nu \Im (g( w)) }}{2^n}
	< 
	\lr{ \e^{0.2607}/2} ^n < 1.541^{-n}
	\]
	which together with \eqref{eq:L10.02} yields the desired bound:
	\[ 
	\lrv{ f(na_k+\imagI y)} < 860 \cdot 1.541^{-n}   \,  \lrv{ n a_k + \imagI y}^{-1.5}.
	\] 
	
	\paragraph{Proof of \eqref{eq:L09.02}.}
	Suppose, $k \geq n \geq 2$.
	We utilise \eqref{eq:1.2.01}  to rewrite \eqref{eq:L08.01} as
	\begin{equation}
		f(na_k+\imagI y)
		=
		\frac{    2^{-n } \lr{1 - \e^{-2y/n}  }^n  \e^{ - i\pi ( 2\nu +1)/4}   }{\sqrt{ \pi /2}}  \cdot \frac{    1 + \rho(\nu, na_k + \imagI y)   }  {( na_k + \imagI y)^{3/2}},   \label{eq:L09.01}
	\end{equation} 
	since  $\imagI (n a_k + \imagI y) = \imagI n a_k - y $ and
	\[ 
	\e^{\imagI   n a_{k+1}} \cdot   \e^{ \imagI (na_k  - \nu \pi/2 - \pi/4)} 
	= \e^{ - i\pi ( 2\nu +1)/4}
	\ \text{and}\  
	\e^{-y}  \lr{\e^{y/n}- \e^{-y/n}  }^n=    \lr{1- \e^{-2y/n}  }^n .
	\] 
	Now,  using \eqref{eq:L09.01}  we can bound
	\begin{equation}
		\lrv{ f(na_k+\imagI y) -   f(na_{k+1}+\imagI y)}
		\leq 
		2^{-n} \lrv{ 
			\frac{   1 + \rho(\nu, na_k + \imagI y)   }  {( na_k + \imagI y)^{3/2}}
			-
			\frac{   1 + \rho(\nu, na_{k+1} + \imagI y)   }  {( na_{k+1} + \imagI y)^{3/2}}
		}.
		\label{eq:L09.03}
	\end{equation} 
	Since $  na_k \geq n^2$ and $\nu<n$,  from  \eqref{eq:1.2.03}   we have
	\[ 
	\lrv{ \rho(\nu, na_{k} + \imagI y) } \cdot   \lrv{ na_{k}   + \imagI y }
	<n^2 \e ;
	\]
	similarly,
	\[ 
	\lrv{ \rho(\nu, na_{k+1} + \imagI y) } \cdot \lrv{ na_{k+1}   + \imagI y }  < n^2 \e .
	\]
	Therefore we can estimate the RHS of \eqref{eq:L09.03} via 
	\begin{align*} 
		&\lrv{ 
			\frac{   1 + \rho(\nu, na_k + \imagI y)   }  {( na_k + \imagI y)^{3/2}}
			-
			\frac{   1 + \rho(\nu, na_{k+1} + \imagI y)   }  {( na_{k+1} + \imagI y)^{3/2}}
		}
		\\
		&  
		\leq 
		\lrv{ 
			\frac{   1   }  {( na_k + \imagI y)^{3/2}}
			-
			\frac{   1  }  {( na_{k+1} + \imagI y)^{3/2}}
		}
		+ 
		\lrv{ 
			\frac{   \rho(\nu, na_k + \imagI y)   }  {( na_k + \imagI y)^{3/2}}
		}
		+
		\lrv{ 
			\frac{     \rho(\nu, na_{k+1} + \imagI y)   }  {( na_{k+1} + \imagI y)^{3/2}}
		}
		\\
		& < \lrv{ 
			\frac{   1   }  {( na_k + \imagI y)^{3/2}}
			-
			\frac{   1  }  {( na_{k+1} + \imagI y)^{3/2}}
		}
		+ \frac{n^2 \e }{\lrv{na_k + \imagI y}^{5/2} }+ \frac{n^2 \e }{\lrv{na_{k+1} + \imagI y}^{5/2} }
	\end{align*}  
	or (since $na_{k+1} > na_k$)
	\begin{equation}
		2^n\lrv{ f(na_k+\imagI y) -   f(na_{k+1}+\imagI y)}
		<
		\lrv{ 
			( na_k + \imagI y)^{-3/2}
			-
			( na_{k+1} + \imagI y)^{-3/2}
		}
		+\frac{2n^2 \e }{\lrv{na_k + \imagI y}^{5/2} } .
		\label{eq:L09.04}
	\end{equation} 
	By the mean value theorem (applied to the function $x \mapsto  (x+\imagI y)^{-3/2} $ ),
	\[ 
	\frac{   1   }  {( na_k + \imagI y)^{3/2}}
	-
	\frac{   1  }  {( na_{k+1} + \imagI y)^{3/2}} 
	=
	\frac{3\pi n}{2 ( \xi  + \imagI y)^{5/2}}, 
	\]
	for some $\xi \in (na_k, na_{k+1})$, which gives us 
	\[
	\lrv{ 
		\frac{   1   }  {( na_k + \imagI y)^{3/2}}
		-
		\frac{   1  }  {( na_{k+1} + \imagI y)^{3/2}}
	}
	< \frac{1.5  \pi n}{\lrv{ na_k + \imagI y}^{5/2}}.
	\]
	Combining this with   \eqref{eq:L09.04} yields \eqref{eq:L09.02}:
	\[
	\lrv{ f(na_k+\imagI y) -   f(na_{k+1}+\imagI y)}
	\leq 
	2^{-n} 
	\cdot 	\frac{   1.5 \pi n + 2 \e n^2  }  {\lrv{ na_k + \imagI y}^{5/2}}
	< \frac{15n^2} {2^n\lrv{ na_k + \imagI y}^{5/2}}.
	\] 
\end{proof}

\subsection{Proof of   Theorem 2} \label{sec:A.5} 
Now we can prove \cref{thm:L11}, restated here for convenience:
\thesecondthm*

\begin{proof}[Proof of  \cref{thm:L11}]
	Denote
	\begin{align*}
		& I_0(n,\nu) = \int_0^{n a_1}  x^{-1}  J_\nu (x)   \cos^n  \lr{\frac{x}{n} } \dx ;\\
		& I_k(n,\nu) = \int_{n a_k}^{n a_{k+1}}  x^{-1}  J_\nu (x)   \cos^n \lr{\frac{x}{n}} \dx ,  \quad  k \in \mbb N.
	\end{align*}
	
	\paragraph{The tail integral (proof of \eqref{eq:L11.01}).}
	Let $ k \geq n \geq 2 $ and notice that   $ I_k(n,\nu) $ is the real part of the integral 	$  \int_{n a_k}^{n a_{k+1}} f(x)  \dx   $.
	The function   $ f $ is  holomorphic in a domain containing  $ D_{n a_k,  na_{k+1}} $; let us verify the other assumptions of   \cref{thm:L02}.
	
	Since  $  na_k \geq  n (n-0.5) \pi > n^2 $,  \cref{eq:L08.02}  in
	\cref{thm:L08b} implies
	\[ 
	\sup_{x \in [n a_k,  na_{k+1}]}  \lrv{ f(x + \imagI R)}     \leq   \frac{3}{ \lrv{ na_k + \imagI R}^{1.5}}  
	\]	
	and therefore the assumption $ \lim\limits_{R \to +\infty}  \sup_{x \in [a,b]}  \lrv{ f(x + \imagI R)}   = 0 $ of \cref{thm:L02} is satisfied.
	
	From \eqref{eq:L09.01b} it follows that  $\lrv{f(na_k + \imagI y )} \leq 3 \cdot 2^{-n}\lrv{na_k + \imagI y}^{-3/2} $, thus  the integral $ \int_0^ \infty  f(na_k+\imagI y) \dy  $ converges and its absolute value, by \cref{thm:L03}, \cref{eq:L03.01}, is  bounded by 
	\[
	 \int_0^ \infty  \lrv{f(na_k+\imagI y) } \dy  
	 \leq 
	{3 \cdot 2^{-n} \int_{0}^{\infty} \frac{\dy }{  (n^2a_k^2 + y^2)^{3/4} }
		< \frac{9 \cdot 2^{-n} }{\sqrt{n a_k}}};
	\]
	 similarly, $ \int_0^ \infty  f(na_{k+1}+\imagI y)   \dy $ converges.

	 Now we see that all  assumptions of \cref{thm:L02} are satisfied, therefore we can apply \eqref{eq:L02e01} to obtain
	 \begin{equation}\label{eq:A.16}
	     \lrv{I_k(n,\nu) } \leq 
	     \lrv{\int_{n a_k} ^{n a_{k+1}}  f(x) \dx } 
	     \leq   \int_0^ \infty  \lrv{f(na_k+\imagI y)  -  f(n a_{k+1}+\imagI y) }\dy .
	 \end{equation}
	 Now apply  \cref{thm:L09b}, \cref{eq:L09.02} to upper-bound the difference of $f$ values as
	 \begin{equation}\label{eq:A.17}
	     \lrv{f(na_k+\imagI y)  -  f(n a_{k+1}+\imagI y) }
	 \leq 
	 \frac{15  n^2   }{2^n   } \,   \lrv{ na_{k}   + \imagI y }^{-5/2} .
	 \end{equation}
	 By \cref{thm:L03}, \cref{eq:L03.02} we have
	 \begin{equation}\label{eq:A.18}
	       \int_0^ \infty  \frac{\dy }{(n^2 a_{k}^2   + y^2 )^{5/4}} < \frac{2}{\sqrt{n^3 a_k^3}} 
	       <  \frac{30   \sqrt{n} }{2^n   k^{3/2} },
	 \end{equation}
	 where the last inequality uses $k \leq a_k $.
	 
	 Now, by combining \eqref{eq:A.16}-\eqref{eq:A.18}, we arrive at
	\[ 
	\lrv{ \int_{n a_n}^\infty x^{-1}  J_\nu (x)   \cos^n\lr{ \frac{x}{n}  } \dx   }   \leq \sum_{k=n}^  \infty  \lrv{I_k(n,\nu) } 
	<   \frac{30       \sqrt{n} }{2^n    } \sum_{k=n}^  \infty    k^{-3/2} .
	\]
	Since $30 \sum_{k=n}^  \infty    k^{-3/2}  < \sum_{k=1}^  \infty    k^{-3/2}  \approx 30\cdot2.6124 \ldots < 100$, 
	we are done.

	\paragraph{The middle part (proof of \eqref{eq:L11.02}).}
	Let $ k   $ satisfy $ 1 \leq k \leq n-1 $ and notice that   $ I_k(n,\nu) $ is the real part of the integral 	$  \int_{n a_k}^{n a_{k+1}} f(x)  \dx   $.
	The function   $ f $ is  holomorphic in a domain containing  $ D_{n a_k,  na_{k+1}} $; let us verify the other assumptions of   \cref{thm:L02}.
	
	When $ R>n^2 $,  \cref{thm:L08b} implies
	\[ 
	\sup_{x \in [n a_k,  na_{k+1}]}  \lrv{ f(x + \imagI R)}     \leq   \frac{3}{ \lrv{ na_k + \imagI R}^{1.5}}  
	\]	 
	and therefore the assumption $ \lim\limits_{R \to +\infty}  \sup_{x \in [a,b]}  \lrv{ f(x + \imagI R)}   = 0 $ of \cref{thm:L02} is satisfied.
	
	Split 
	\[ 
	\int_0^ \infty  f(na_k+\imagI y)  \dy =  \int_0^ {n^2}    f(na_k+\imagI y)\dy   +  \int_{n^2}^ \infty  f(na_k+\imagI y) \dy ,
	\]
	then   from \eqref{eq:L09.01b} in \cref{thm:L09b} we conclude that 
	$ \int_{n^2}^ \infty  f(na_k+\imagI y) \dy  $
	converges and its absolute value is bounded by
	\[ 
	3 \cdot 2^{-n}\int_{0}^{\infty} \frac{\dy }{  (n^2a_k^2 + y^2)^{3/4} }
	< 
	9 \cdot 2^{-n}   \lr{n a_k}^{-0.5}.
	\]
	Thus $  \int_0^ \infty  f(na_k+\imagI y)  \dy $ converges as well  (and so does  $  \int_0^ \infty  f(na_{k+1}+\imagI y)  \dy $) and  \cref{thm:L02} applies. From \eqref{eq:L02e01}
 we now obtain 
	\[ 
	\lrv{I_k(n,\nu)} 
	\leq 
	\lrv{\int_0^ \infty  f(na_k+\imagI y)  \dy }
	+
	\lrv{\int_0^ \infty  f(na_{k+1}+\imagI y)  \dy }
	.
	\]

	Moreover, we also conclude 
	\begin{equation}\label{eq:L11.04}
		\lrv{\int_0^ \infty  f(na_k+\imagI y)  \dy } <  \lrv{ \int_0^ {n^2}    f(na_k+\imagI y)\dy  } +\frac{ 8}{2^n \sqrt{n}  }.
	\end{equation}
	Here we have used  $ a_k \geq a_1 = \pi/2 $ and
	\[  
	\lrv{\int_{n^2}^ \infty  f(na_k+\imagI y) \dy  }
	<
	\frac{9 }{2^n \sqrt{n a_1}}
	=
	\frac{9 \sqrt 2 / \sqrt \pi }{   2^n \sqrt{n}    } 
	<
	\frac{8}{    2^n \sqrt{n} } .
	\]

	Due to  \cref{thm:L09b}, \cref{eq:L09.02a}, we can estimate the remaining integral as
	\[ 
	\lrv{ \int_0^ {n^2}    f(na_k+\imagI y)\dy  }
	<
	860 \cdot 1.541^{-n}  \,  \int_0^ {\infty} \lrv{ n a_k + \imagI y}^{-1.5} \dy 
	< 
	\frac{1800  }{ 1.541^{n} \, \sqrt{n  }} .
	\]
	The last step applies \eqref{eq:L03.01} and inequalities 
	\[ 
	860 \cdot \frac{B(0.5, 0.25) } {  2\sqrt{a_k}}  
	\leq 
	860 \cdot \frac{B(0.5, 0.25) } {  \sqrt{2\pi}}
	< 
	1800. 
	\]
	Now \eqref{eq:L11.04} gives
	\[ 
	\lrv{\int_0^ \infty  f(na_k+\imagI y)  \dy } 
	<
	\frac{1800}{ 1.541^{n} \, \sqrt{n   }}
	+
	\frac{ 8}{2^n \sqrt{n}  }
	< \frac{2000}{ 1.541^{n} \, \sqrt{n  }},
	\]
	thus
	\[ 
	\lrv{I_k(n,\nu)} 
	\leq 
	\lrv{\int_0^ \infty  f(na_k+\imagI y)  \dy }
	+
	\lrv{\int_0^ \infty  f(na_{k+1}+\imagI y)  \dy }
	<
	\frac{4000}{ 1.541^{n} \, \sqrt{n  }}.
	\]
	
	Finally,    \eqref{eq:L11.02} is obtained as
	\[ 
	\lrv{ \int_{n a_1}^{na_n} x^{-1}  J_\nu (x)   \cos^n\lr{ \frac{x}{n}  } \dx   }
	\leq \sum_{k=1}^{n-1} 	\lrv{I_k(n,\nu)} 
	< \frac{4000 \sqrt n}{1.541^{n} }.
	\]

	\paragraph{The bulk (proof of \eqref{eq:L11.03}).}
	Denote $ c = \nu / n  \in (\ratioLower,1)$.
	We start by splitting
	\begin{equation}\label{eq:L11.05}
		I_0(n,\nu)  =
		\int_0^{\nu}  x^{-1}  J_\nu (x)   \cos^n  \lr{\frac{x}{n} } \dx 
		+
		\int_{nc}^{n \pi /2}  x^{-1}  J_\nu (x)   \cos^n  \lr{\frac{x}{n} } \dx .
	\end{equation}
	The second integral can be bounded by 
	employing  
	the fact that $ \cos(x/n)^n / x $ is decreasing in $ [nc ,n\pi/2]  $
	and
	noting that \cite[\href{https://dlmf.nist.gov/10.14.E1}{Eq. 10.14.1}]{DLMF}  the absolute value of the Bessel function $ J_\nu $ is bounded  by 1 for all real arguments, whence
	\begin{equation}\label{eq:L11.06}
		\int_{nc}^{n \pi /2}  x^{-1}  J_\nu (x)   \cos^n  \lr{\frac{x}{n} } \dx 
		\leq  
		\frac{\cos^n  (c)}{nc} \cdot  (n\pi/2-nc)
		<
		2\cos^n(\ratioLower).
	\end{equation}
	The last step relies on the assumption $ \ratioLower > \pi/6 $.

	For the first integral in the RHS of \eqref{eq:L11.05}, we apply \cref{thm:L04}, which gives $\cos^n  \lr{\frac{x}{n} } \leq \exp(-\frac{x^2}{2n})$, as well as the estimate
	\[
	\lrv{J_\nu(\nu t) }  \leq  J_\nu(\nu)\cdot  t^\nu  \exp \lr{\frac{\nu^2 (1-t^2)}{2\nu+4}},
	\] 
	valid \cite[p.204]{RBParis}  for all $ t\in (0,1] $ and $\nu>0$. This inequality can be slightly weakened by bounding $\frac{\nu^2 (1-t^2)}{2\nu+4} < \frac{\nu (1-t^2)}{2}$.
	Those imply that the integral  can be bounded as
	\begin{align*}
		\lrv {\int_0^{\nu}  x^{-1}  J_\nu (x)   \cos^n  \lr{\frac{x}{n} } \dx }
		\leq 
		&
		J_\nu(\nu) \nu^{-\nu} \cdot  
		{
			\int_0^{\nu } 
			x^{\nu-1}  \exp \lr{ \frac{\nu}{2} - \frac{x^2}{2\nu} - \frac{x^2}{2n}  }   \dx 
		}\\ 
		= &
		J_\nu(\nu) \lr{\frac{ \e}{\nu^2}}^{\nu/2} \cdot  
		{
			\int_0^{\nu } 
			x^{\nu-1}  \exp \lr{   - \frac{(c+1)x^2}{2\nu}   }   \dx 
		}\\ 
		= &
		J_\nu(\nu)\lr{\frac{ \e}{\nu^2}}^{\nu/2} \cdot  
		{
			\int_0^{(c+1)\nu/2 } 
			\lr{\frac{2\nu y}{c+1}}^{\nu/2-1} \, \frac{\nu}{(c+1)}   \e^{-y}
			\dy 
		}\\ 
		= &
		\frac{1}{2}
		J_\nu(\nu)\lr{\frac{2 \e}{\nu(c+1)}}^{\nu/2} \cdot  
		{
			\int_0^{(c+1)\nu/2 }  
			y^{\nu/2-1}\e^{-y}  \dy 
		}\\ 
		<  &
		0.5 J_\nu(\nu)\lr{\frac{2 \e}{\nu(c+1)}}^{\nu/2} \cdot  
		\underbrace{
			\int_0^{\infty}  
			y^{\nu/2-1} \e^{-y}  \dy 
		}_{\Gamma(\nu/2)}.
	\end{align*}
	Now we apply a bound on the  gamma function, valid \cite[\href{https://dlmf.nist.gov/5.6.E1}{Eq. 5.6.1}]{DLMF} for positive arguments:
	\[
	\Gamma(x)>\sqrt{\frac{2\pi}{x}} x^x \exp\lr{-x + \frac{1}{12x}} , \quad x>0.
	\]
	Taking   $x=\nu/2$ this implies that 
	\[
	\Gamma(\nu/2)>2\sqrt{\frac{\pi}{\nu}} \lr{\frac \nu 2}^{\nu/2} \exp\lr{-\frac{\nu}2 + \frac{1}{6\nu}} 
	\]
	and, since $\nu>1$,
	\[
	\lrv {\int_0^{\nu}  x^{-1}  J_\nu (x)   \cos^n  \lr{\frac{x}{n} } \dx }
	<
	J_\nu(\nu)
	\sqrt{\frac{\pi}{\nu}}  \e^{ 1/{6}}
	\lr{{c+1}}^{-\nu/2} .
	\]
	By 
	\cite[\href{https://dlmf.nist.gov/10.14.E2}{Eq. 10.14.2}]{DLMF},
	\[ 
	0 < J_\nu(\nu) < \frac{2^{1/3}}{3^{2/3} \Gamma(2/3) \nu ^{1/3} } < \frac{0.45}{\nu^{1/3}}  .
	\]
	From $0.45\sqrt\pi  \e^{ 1/{6}} < 1$ and $\nu  > 1$ we arrive at
	\begin{equation}\label{eq:L11.07}
		\lrv {\int_0^{\nu}  x^{-1}  J_\nu (x)   \cos^n  \lr{\frac{x}{n} } \dx }
		<
		\frac{0.45 \sqrt \pi  \e^{ 1/{6}} }{\nu^{5/6}} 
		\exp\lr{ - \frac{n c\ln(c+1)}{2}   }
		<
		(1+c)^{-cn/2}.
	\end{equation}
	We note that $ c \mapsto (1+c)^{c/2}$ is increasing in $c$, therefore   
	the bound on the RHS  of \eqref{eq:L11.07} is decreasing in $c$ and attains its maximal value at $\ratioLower$. 
	
	Finally, an application of \cref{thm:L04} gives $ \cos^n(\ratioLower) < \exp(- 0.5 n \ratioLower^2) $;  since  $ -\ratioLower^2 <  - \ratioLower\ln(\ratioLower+1)$, we have 
	\[
	2\cos^n(\ratioLower) < 2(1+\ratioLower)^{-\ratioLower n/2},
	\]
	which yields the desired inequality  \eqref{eq:L11.03}. 
\end{proof}

\subsection{The Hankel function's expansion}\label{sec:4.0}

\begin{proof}[Proof of \cref{thm:L01}]
We proceed to prove the asymptotic expansion of $\hankel(z)$ when the argument and order are of similar magnitude.
We   relate the Hankel function to $ K_\nu(z) $, the modified Bessel function of the second kind, via    \cite[\href{https://dlmf.nist.gov/10.27.E8}{Eq. 10.27.8}]{DLMF}  
\begin{equation}\label{eq:hankel_K}
	K_{\nu}(z)= 0.5 \pi  \e^{(\nu+1)\pi \imagI/2} \hankel (\imagI z) ,
	\quad  
	- \pi \leq \arg z \leq  \pi /2 .
\end{equation}
In \cref{sec:4.1} we will investigate the expansion of $ K_{\nu} (\nu   z) $ due to Olver \cite[Chapter 10]{Olver97b} and sketch a brief overview of the techniques employed. 
These bounds  \eqref{eq:L01.03}-\eqref{eq:L01.04} are not yet explicit in the sense that  the bound \eqref{eq:L01.04}  depends on an unspecified  variational path and has to be estimated.
In  \cref{sec:4.2}  we make the bounds explicit. Finally, in \cref{sec:4.3} we use \eqref{eq:hankel_K} to derive the explicit bounds on the Hankel function.

\subsubsection{Error bounds for the modified Bessel function}\label{sec:4.1}
Denote
\[
 \xi(z) =  (1+z^2)^{1/2} + \ln \frac{z}{1+ \sqrt{1+z^2}},
\] 
where all branches take take their principal values on the positive real axis and are continuous elsewhere.
Consider $ K_{\nu} (\nu   z) $, when  $ \nu >0  $ and $ \lrv{\arg z} < \pi /2 $. Then
	\cite[Chapter 10, Eq. (7.17) \& Eq. (7.15)]{Olver97b}
	\begin{align}
		& K_\nu (\nu z)  = 
		\lr{ \frac{\pi}{ 2\nu } }^{0.5} \frac{\e^{- \nu \xi(z) }}{(1+z^2)^{0.25}} \lr{ 1+ \eta(\nu,z)} \label{eq:L01.03}\\
		& \lrv{ \eta(\nu,z)}   \leq \exp \lr{ \frac{2 \mc V_{+\infty,z}}{ \nu}  } \, \frac{2 \mc V_{+\infty,z}}{ \nu}  , \label{eq:L01.04}
	\end{align}
with  $ \mc V_{+\infty,z} \in \mbb R $  explained later.
	These  estimates  are derived by approximating the solutions of the differential equation \cite[Eq. (7.02), p. 374]{Olver97b},
	\[
	\frac{\D^2 w}{\D z^2}  = \lr{\nu^2  \frac{1+z^2}{z^2}  - \frac{1}{4z^2} }w,
	\]
	which is satisfied by  $ z^{1/2} K_\nu (\nu z)   $; in the differential equation $z$ is confined to the half-plane $\Re z  > 0$.  In the subsequent analysis, change of variables $z \mapsto \xi(z)$ takes place,  mapping the half-plane $ \Re z > 0 $ to a region in the  $\xi $ plane consisting of the half-plane $\Re \xi > 0$ and  the half-strip $ \lrv{\Im \xi } < \frac{1}{2}\pi, \Re \xi \leq 0 $.
	
	\begin{figure}[ht]
		\begin{center}
			\includegraphics[width=\textwidth]{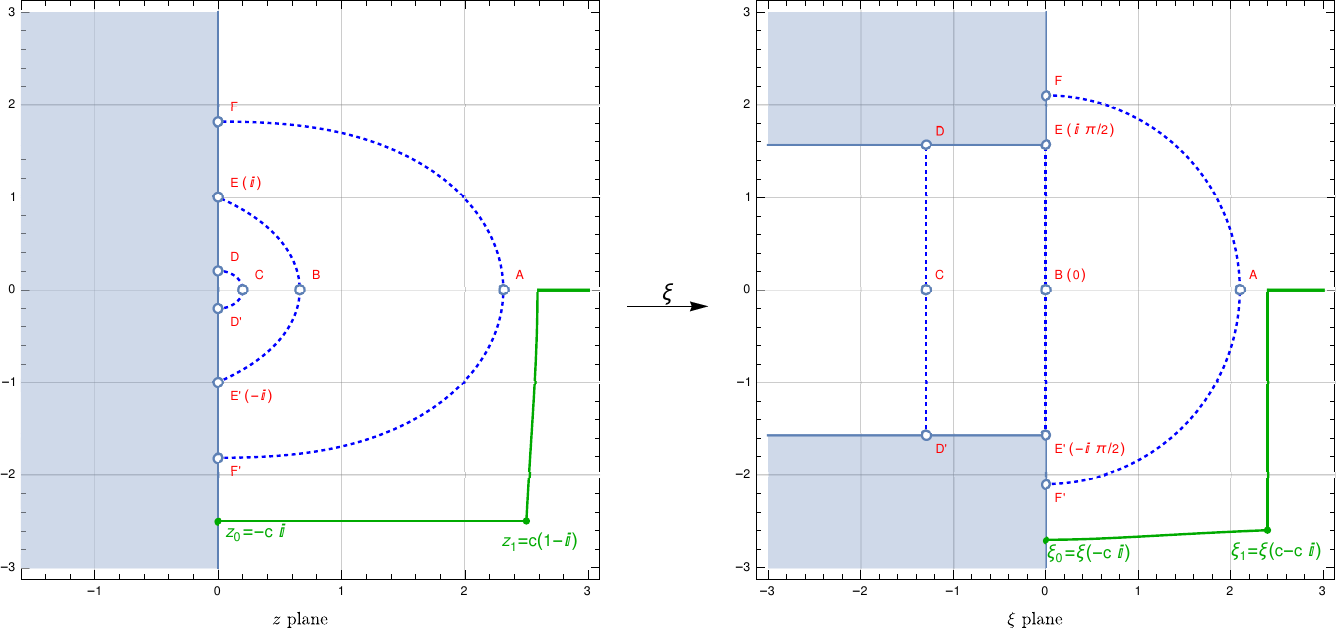}
			\caption{Sketch of the variational path}\label{fig:varpath}
		\end{center}
	\end{figure}
	In \cref{fig:varpath} we illustrate both regions; note that \cref{fig:varpath} essentially  reproduces Figs. 7.1-7.2 from \cite[p.376]{Olver97b} (except for the variational path). On the left, the $z$ plane and the half-plane $\Re z>0$ is shown; on the right, the $\xi$ plane with the image of  $\Re z > 0$ is illustrated.  In the $\xi$ plane, we depict a few contours and points on them; on the left, the preimages of these contours and points are shown. The shaded regions are the ``shadow regions'' in Olver's terminology.
	
	The quantity $ \mc V_{+\infty,z} $ appearing in  \eqref{eq:L01.04}  is defined as  the total variation of \cite[Eq. (7.11), p.376]{Olver97b} the function 
	\begin{equation}\label{eq:u1def}
		U_1(p) = \frac{3p - 5p^3}{24}, \quad p:=(1+z^2)^{-1/2} ,
	\end{equation}
along any \emph{$ \xi $-progressive} path connecting $ +\infty $ with $ z : \Re z>0 $.
	A path $ \gamma $ is said to be $ \xi $-progressive \cite[p.222]{Olver97b} if 1)  $ \gamma $ is a piecewise $ C^2 $-path and 2) $ \Re  \xi(\gamma)$ is non-increasing 
	as $ \gamma $ passes from  $ +\infty  $ to $ z $.

	The case $ \Re z > 0 $ is not sufficient for us, however, since \eqref{eq:hankel_K} effectively rotates the argument of $ \hankel  $ in the complex plane by $ - \pi /2$. In order to prove \eqref{eq:L01e01}  also for  real $ w $, we need the expansion of $ K_{\nu}(\nu z) $ also when $ \arg z = - \pi /2  $.
	Fortunately,     the estimates \eqref{eq:L01.03}-\eqref{eq:L01.04} remain valid \cite[Chapter 10, \S 8.2]{Olver97b} also for $ \arg z = -\pi/2 $, provided that $ \lrv z  $  is bounded away from 1 and the variational path for  $   \mc V_{+\infty,z} $ is correctly constructed (i.e., the path is $ \xi $-progressive).

	Typically the variational path for  $   \mc V_{+\infty,z} $ is chosen so that in the $ \xi $  plane  the image of the path travels\footnote{In fact, the construction describes the \emph{reverse} path connecting $ z $ and $ +\infty $; along the described path $ \Re \xi(z) $ must be non-decreasing. This distinction is unimportant for the value of the variation and from now on we shall ignore it.}  from $ \xi(z) $ parallel to the imaginary axis until the real axis is reached, then proceeding along the real axis to $ +\infty$, see, e.g., \cite[Chapter 10, \S 7.5]{Olver97b},  \cite[p. 764]{Setti} or  \cite[p. 2133]{Bao}.
	
	However, when $z$ might be of the form $ z=-c\imagI $ for some $ c>1 $ (as in our setting), this approach is not suitable, since the path must avoid the point $ z=-\imagI $. Instead, for $ z=-c\imagI $ we form a path as follows: travel parallel to the real axis until $ z_1 = c(1-\imagI) $ is reached (satisfying $ \arg z_1 = -\pi/4 $), then proceed from $ z_1 $ as described previously. The path is sketched in \cref{fig:varpath} (on the left), with its image on the right.

	In \cref{sec:4.2} we show that the described path  is indeed  $ \xi $-progressive and estimate the total variation along this path.

	\subsubsection{Explicit error bounds for the modified Bessel function}\label{sec:4.2}
	
	Fix any $ c>1 $ and define $ \gamma_0(s) = c(s - \imagI) $, $s \in [0,1] $ and $ z_0=\gamma_0(0) = -c \imagI$, $ z_1= \gamma_0(1)   =c (1-\imagI)$.
	We shall show that
	\begin{enumerate}
		\item $ \Re \xi(\gamma_0(s)) $ is  non-decreasing (i.e., the described path from  $ +\infty $ to $ z $ is valid);
		\item $ \mc V_{z_1,z_0} \leq  \frac{c^2(c^2+2)}{\sqrt 8 (c^2-1)^{2.5}}$, where $  \mc V_{z_1,z_0} $ is the variation of $ U_1 $ along $ \gamma_0 $.
	\end{enumerate}
Since the described path connects $ z_0 $ to $ z_1 $ and $ z_1 $ to $ +\infty $ with $ \Re \xi (z) $ is non-decreasing (and the path is clearly piecewise $ C^2 $), we conclude that the construction ensures a $ \xi $-progressive path. Furthermore, the variation $ \mc V_{+\infty,z_1}      $ can \cite[Eq. (5.13)]{Bao} be bounded as $  \frac{1}{12} + \frac{1}{6 \sqrt 5} + \lr{\frac{4}{27}}^{1/4}  $. Therefore, since $ \mc V_{+\infty,z_0} = \mc V_{+\infty,z_1} + \mc V_{z_1,z_0}    $, we can  estimate $ \mc V_{+\infty,z_0} $ as  
	\begin{equation}\label{eq:L01.05}
		\mc V_{+\infty,z_0}  \leq \frac{1}{12} + \frac{1}{6 \sqrt 5} + \lr{\frac{4}{27}}^{1/4} + \frac{c^2(c^2+2)}{\sqrt 8 (c^2-1)^{2.5}}, 
		\qquad   c:=  -\Im z_0 > 1.
	\end{equation} 
	Finally, the estimate \eqref{eq:L01.05} remains valid when $ z_0 = -c \imagI  $ is replaced by $ z =x -c \imagI  $ with any $ x >0 $:
	\begin{itemize}
		\item If $ x \in (0,c] $, then $ z $ lies on  the described path from $ z_0 $ to $ +\infty $, therefore $ \mc V_{+\infty,z} $ is upper-bounded by $ \mc V_{+\infty,z_0} $;
		\item if $ x > c $, then $ \arg z  \in (-\pi/4, 0)  $ and the bound  \cite[Eq. (5.13)]{Bao} applies; then,  $ \mc V_{+\infty,z} $ is upper-bounded by $  \frac{1}{12} + \frac{1}{6 \sqrt 5} + \lr{\frac{4}{27}}^{1/4}   $.
	\end{itemize}

\paragraph{The path is \texorpdfstring{$ \xi $}{xi}-progressive.}
Since $ \xi $ is symmetric around the real axis, i.e., $ \xi(\bar z) = \overline{\xi(z)}  $, we have $ 2\Re \xi(z) = \xi(z) + \xi(\bar z) $.
Define $ h(s)  =     \xi( c(s - \imagI)  )+\xi( c(s+ \imagI)  ) $, then 
 $ 2\Re \xi(\gamma_0(s)) =h(s)$ and we need to show that $ h : [0,1]\to \mbb R $  is non-decreasing.

 Let  $ s\in [0,1] $; denote $ z =\gamma (s)= c(s-\imagI) $ and  $ \omega = \sqrt{1 + z^2} $.
 Since\footnote{In fact, in \cite{Olver97b} this is the defining property of $ \xi $.} $ \frac{\D \xi}{\D z}  = \frac{\sqrt{1+z^2}}{z}$,  we find that
 \begin{align*}
 	h'(s)  &   = c    \frac{\D \xi}{\dz }\lr{c(s - \imagI) }  + c \frac{\D \xi}{\dz }\lr{c(s + \imagI) } 
 	= c\frac{\omega}{z} + c\frac{\bar \omega }{\bar z}  = \frac{\omega \bar z + z \bar \omega }{c(s^2+1)}   = \frac{2 \Re (\omega \bar z)}{c(s^2+1)} .
  \end{align*}
Since $z = c(s-\imagI)$ satisfies $ \arg(z) \in [-\pi/2, - \pi/4] $, it follows that $\arg (z^2) \in [-\pi, - \pi/2]$ and\footnote{To exclude the possibility $\arg(z^2+1)=0$, one must also take into account that $\lrv z \geq c>1$.} $\arg(z^2+1) \in [-\pi, 0)$. Consequently,  $\arg(\omega)  \in [-\pi/2,0) $.  Since
$ \arg (\bar z) \in [\pi/4,  \pi/2] $, we obtain
$\arg(\omega \bar z)  = \arg(\omega) + \arg(z)  \in [-\pi /4,  \pi /2)$,
thus $ \Re (\omega \bar z) > 0 $. We see that $ h'(s)  $ is positive for $ s\in [0,1] $, and $ h(s) =2\Re \xi(\gamma_0(s)) $ is non-decreasing as required.

\paragraph{Estimate of the variation \texorpdfstring{$ \mc V_{z_1,z_0}$}{V z1,z0}.}

It is worth  recalling that, 
for   a holomorphic function $f$ in a complex domain $ D $ containing a piecewise continuously differentiable path   $ \gamma(s) $, $s \in [s_0, s_1] $,
the total variation of $ f $ along the path $ \gamma  $ is defined as
\[ 
\mc V_{\gamma}(f) =  \int_{s_0}^{s_1} \lrv{ f'(\gamma(s)) \gamma'(s) } \ds .
\]
We have
\[ 
\mc V_{z_1,z_0} = 
\int_0^1 
\lrv{ \frac{\D U_1}{\D z} \lr{p(z)}   }_{	z= \gamma_0(s)} 
\lrv{ \gamma_0'(s) } \ds.
\]
Since
\[
\frac{\D U_1(p) }{\D p}=  \frac{1-5p^2}{8},
\quad
\frac{\D p (z) }{\D z}= \frac{-z}{(z^2+1)^{3/2}}
\quad\text{and}\quad
\gamma_0'(s) = c ,
\]
the integrand equals
\[ 
\frac{1}{8} \, \lrv{ 1 - 5(1+z^2)^{-1} } 
\cdot 
\frac{\lrv z}{\lrv{z^2+1}^{-3/2}}
\cdot c
=
\frac{c}{8} \cdot  \frac{\lrv{z (z^2-4)}}{ \lrv{z^2+1}^{5/2}},
\quad z:=  c(s - \imagI ) .
\] 
From $ \lrv {z \pm \imagI} =\lrv{cs - (c \mp 1) \imagI } \geq  c \mp 1$  
we have $ \lrv{z^2+1}^{5/2} \geq (c^2-1)^{2.5} $. On the other hand,
$ \lrv z \leq c \lrv{1-\imagI}  = c \sqrt 2$ and $ \lrv{z^2-4} \leq \lrv z^2 + 4 \leq 2c^2 +4 $. Since the obtained bound is independent of $ z $,   the integral   satisfies
\[ 
\mc V_{z_1,z_0} \leq   \frac{  \sqrt 2 \, c^2(2c^2+4)}{8 (c^2-1)^{2.5}  }
=
\frac{c^2(c^2+2)}{\sqrt 8 (c^2-1)^{2.5}},
\]
as claimed.

\subsubsection{Explicit error bounds for the Hankel function}\label{sec:4.3}

Let $ w \in \mbb C $ be such that  $ \arg w \in [0; \pi /2) $, then  $ z = -\imagI w $ satisfies $ \arg z \in  [-\pi /2; 0) $ and \eqref{eq:hankel_K}  with \eqref{eq:L01.03} imply
\begin{equation}\label{eq:L01.06}
	\hankel(\nu w) 
	=
	\frac{2}{\pi } \e^{- (\nu + 1) \pi  \imagI /2}K_\nu (\nu z) 
	=
	-\imagI \sqrt{\frac{2}{ \pi \nu }} \frac{\e^{- \nu (\xi(-\imagI w) + \imagI \pi /2 ) }}{(1-w^2)^{0.25}} \lr{ 1+ \eta(\nu,-\imagI w)}.
\end{equation}
To see that  we can rewrite \eqref{eq:L01.06} as \eqref{eq:L01e01},    simplify  $ \xi(-\imagI w)   $ as 
\begin{align*}
	\xi(-\imagI w)  &  
	=  (1-w^2)^{1/2} + \ln \frac{-\imagI w}{1+ \sqrt{1-w^2}}  
	=  (1-w^2)^{1/2} -  \ln \frac{\imagI ( 1+ \sqrt{1-w^2})}{w} \\
	& =  (1-w^2)^{1/2} -  \ln \lr{  w^{-1} - \imagI\sqrt{1-w^{-2} }   }    - \pi \imagI /2  
	\\
	& 
	=  (1-w^2)^{1/2}  +  \imagI  \arccos (1/w)   - \pi \imagI /2 \\
	&  = -\imagI  (w^2-1)^{1/2}  +  \imagI  \arccos (1/w)   - \pi \imagI /2.
\end{align*}
To verify the last equality, notice that $ \arg(w^2-1) \in [0; \pi ) $, thus  $ \arg (\imagI  (w^2-1)^{1/2})  \in [ \pi/2; \pi  )$ and $ \arg(\sqrt{1-w^2}) \in [  -\pi/2,  0 )  $, hence   $ \sqrt{1-w^2}  = -\imagI  \sqrt{w^2-1}  $.

Finally, to show  \eqref{eq:L01e02},  notice that  the function $ \frac{1}{12} + \frac{1}{6 \sqrt 5} + \lr{\frac{4}{27}}^{1/4} + \frac{c^2(c^2+2)}{\sqrt 8 (c^2-1)^{2.5}} $ is decreasing  in $ c $ for $ c>1 $ 	(seen by differentiating with respect to $ c $). Since  it takes value  $ 2.272365\ldots $ at $ c=\pi/2 $,   for $ \nu >1 $ and $ c =\Re(w)\geq \pi /2 $ we have
\[ 
\lrv{ 1+ \eta(\nu,-\imagI w)} \leq
1  + \exp \lr{ \frac{2 \mc V_{+\infty, - \imagI w}}{ \nu}  } \, \frac{2 \mc V_{+\infty, - \imagI w}}{ \nu} 
\leq 
1+2\cdot  2.273  \e^{2\cdot  2.273 }  <  430.
\]
\end{proof}
\begin{remark}\label{footnote:L8interp}
The  argument above about rewriting $ \xi(-\imagI w)   $, in effect, expresses $\xi(z)-z$ as $\imagI (g(\imagI z)-\pi/2)$ for $z : \arg z \in [-\pi/2,0)$, with $g $ defined as in \cref{thm:L06}. Thus, taking into account the symmetry of $\xi$ around the real axis, \cref{thm:L06} upper bounds  the real part of $z-\xi(z)$ for $z : \Re z>0,\, \lrv{\Im z} \geq 1$.
\end{remark}

\bibliographystyle{apsrev4-1}
\bibliography{biblio}

\end{document}